\documentclass{article}
\usepackage[a4paper,margin=1in]{geometry}
\usepackage[T1]{fontenc}
\usepackage{setspace}
\usepackage{graphicx} % Required for inserting images
\usepackage{amsmath}
\usepackage{mathtools}
\usepackage{amsthm}
\usepackage{tablefootnote}
\usepackage{braket}
\usepackage{amssymb}
\usepackage{mathrsfs}
\usepackage{amsfonts}
\usepackage{authblk}
\usepackage{makecell}
\usepackage{threeparttable}
\usepackage{bbding}
\usepackage{subfig}
\usepackage{dcolumn}
\usepackage{float}
\usepackage{comment}
\usepackage{multirow}
\usepackage{color}
\usepackage{tcolorbox}
\usepackage{algorithm}
\usepackage{algorithmic}
\usepackage{tikz}
\usepackage{appendix}
\usepackage[linktocpage=true, colorlinks,
linkcolor=blue,citecolor=blue,
bookmarks,bookmarksopen,bookmarksnumbered]
{hyperref}
\usepackage{circuitikz} 
\usepackage{tikz-3dplot} 
\usepackage{adjustbox}
\allowdisplaybreaks[4]

\newcommand{\rbra}[1]{\left( #1 \right)} 
\newcommand{\sbra}[1]{\left[ #1 \right]}
\newcommand{\cbra}[1]{\left\{ #1 \right\}}
\newcommand{\abs}[1]{\lvert #1 \rvert}
\newcommand{\Abs}[1]{\lVert #1 \rVert}

\newcommand{\tr} {\operatorname{tr}}
\newcommand{\diag} {\operatorname{diag}}
\theoremstyle{plain}

\newtheorem{theorem}{Theorem}[section]
\newtheorem{proposition}[theorem]{Proposition}
\newtheorem{lemma}[theorem]{Lemma}
\newtheorem{corollary}[theorem]{Corollary}
\theoremstyle{definition}
\newtheorem{definition}[theorem]{Definition}

\theoremstyle{remark}

\begin{document}
\title{Quantum Approximate Counting with Bernoulli Oracles}
\author[1]{Chengshen Gao}
\author[2]{Yongzhen Xu\thanks{Corresponding author: xuyongzh@gmail.com}}
\author[1,2]{Lvzhou Li\thanks{Corresponding author: lilvzh@mail.sysu.edu.cn}}
\affil[1]{Institute of Quantum Computing and Software, School of Computer Science and Engineering, Sun Yat-sen University, Guangzhou 510006, China}
\affil[2]{Quantum Science Center of Guangdong-Hong Kong-Macao Greater Bay Area, Shenzhen, 518045, Guangdong, China}
\date{}

\maketitle

\begin{abstract}
Quantum counting is a fundamental quantum algorithm that estimates the fraction of marked
elements using a membership oracle, achieving a quadratic speedup over classical sampling.
The membership oracle, however, assumes exact labeling of each element, but this assumption fails when the labels are inherently probabilistic.
We study quantum counting with \emph{Bernoulli oracles}, where given $m$ Bernoulli distributions
with unknown biases $p_1,\dots,p_m$ and a gap parameter $\Delta$, the goal is to estimate the fraction
$\rho$ of \emph{positive} distributions ($p_i\ge1/2+\Delta$) to within additive error
$\epsilon$. We prove an upper bound of
$\tilde{O}\!\big(\frac{\sqrt{\rho}}{\Delta\epsilon}+\frac{1}{\Delta\sqrt{\epsilon}}\big)$
queries, achieving a quadratic speedup over the classical sample complexity. % of $\Theta(\rho/\Delta^2\epsilon^2)$. 
  Our algorithm first uses the Quantum Singular Value
Transformation (QSVT) to coherently amplify the bias gap without collapsing the superposition over
distributions, and then applies two-stage adaptive amplitude estimation. We complement this upper bound with
a near-matching lower bound of $\Omega(\sqrt{\rho}/(\Delta\epsilon))$ via a new composition
theorem for the quantum adversary method in the Boolean-over-average-case direction. For the special case of a constant gap $\Delta=\Theta(1)$, which corresponds to the
bounded-error oracle where each query returns the correct label with constant probability, our
bounds specialize to $\tilde{O}\big(\frac{\sqrt{\rho}}{\epsilon}+\frac{1}{\sqrt{\epsilon}}\big)$
and $\Omega(\frac{\sqrt{\rho}}{\epsilon})$, thereby  characterizing the query complexity of
quantum counting with bounded-error oracles. 
\end{abstract}

\section{Introduction}
One of the main motivations for works in quantum computing is the prospect of fast quantum algorithms for tackling fundamental computational problems~\cite{montanaro2016quantum,zhang2022brief}. Among the most prominent examples is quantum counting~\cite{10.5555/646252.686013}, which addresses the following task: given a membership oracle for an $m$-element set that exactly labels each element as either $0$ or $1$, estimate the fraction $\rho$ of elements labeled $1$ to within additive error $\epsilon$. Quantum counting solves this task with $\Theta\!\big(\frac{\sqrt{\rho}}{\epsilon}\big)$ queries, a quadratic improvement over classical random sampling, which requires $\Theta\!\big(\frac{\rho}{\epsilon^2}\big)$ samples. This quadratic speedup has been a recurring theme in quantum algorithm design: amplitude estimation~\cite{brassard2002quantum}, quantum mean estimation~\cite{hamoudi2021quantum,kothari2023mean}, quantum distribution testing~\cite{gilyen2020distributional,luo2024succinct}, and quantum machine learning~\cite{pmlr-v267-luo25e,zhang2024quantum}, all of which stem from the same underlying principle and achieve similar speedups over their classical counterparts.

The membership oracle, however, is an idealized abstraction: it assumes that whether an element is marked can be determined with certainty from a single query. In many realistic settings, this labeling is itself subject to statistical uncertainty. For instance, in crowdsourcing~\cite{kittur2008crowdsourcing}, whether an annotator is reliable is not a deterministic label but a latent bias parameter governing the probability of correct annotation. In multi-armed bandits~\cite{slivkins2019introduction}, each arm's optimality depends on the mean of an unknown reward distribution, and in empirical Bayes methods~\cite{robbins1951asymptotically,robbins1992empirical}, parallel Bernoulli observations with heterogeneous biases are aggregated to infer the prior. In such scenarios, each query returns not a definite 0/1 label, but a sample from a \emph{Bernoulli distribution} whose bias parameter encodes the property of interest. This naturally leads to the problem of quantum counting with the Bernoulli oracle, which is formalized as follows.

We are given a collection of $m$ Bernoulli distributions $\mathcal{D} =\{D_1,\ldots,D_m\}$ on $\cbra{0,1}$ with unknown parameters $p_1,\dots,p_m \in [0,1]$, where $\Pr[D_i=1] = p_i$. Each distribution is either \emph{positive} ($p_i \ge 1/2 + \Delta$) or \emph{negative} ($p_i \le 1/2 - \Delta$), where $\Delta\in \left(0,\frac{1}{2}\right]$.  The task is now  to output an estimate $\hat{\rho}$ of the true fraction
$\rho = \frac{1}{m} \lvert\{i : p_i \ge 1/2 + \Delta\}\rvert$
such that $\lvert\hat{\rho} - \rho\rvert \le \epsilon$.  In the classical setting, one may adaptively select any distribution and draw samples from it. This problem has been studied by Lee and Valiant~\cite{lee2021uncertainty}, who established a tight sample complexity of $\Theta\!\left(\frac{\rho}{\Delta^{2}\epsilon^2} \log\frac{1}{\delta}\right)$ for success probability $1-\delta$.
The $1/\Delta^2$ and $1/\epsilon^2$ scaling means that, when the bias gap $\Delta$ is small or a high-precision estimate is required,
the sample size quickly becomes prohibitive.

In the quantum setting, we are given query access to a quantum oracle $O$ that acts on an index register and two auxiliary
registers as
\begin{align}\label{eq: input oracle}
    O\ket{i}_A\ket{\vec{0}}_{B,C}=\ket{i}_A\left(\sqrt{p_i}\ket{1}_B\ket{\varphi_{i,1}}_C+\sqrt{1-p_i}\ket{0}_B\ket{\varphi_{i,0}}_C\right),
\end{align}
where $i\in\cbra{1,...,m}$. %The goal is to output an estimate $\hat{\rho}$ of the true fraction $\rho = \frac{1}{m} \lvert\{i : p_i \ge 1/2 + \Delta\}\rvert$ such that $\lvert\hat{\rho} - \rho\rvert \le \epsilon$, using as few queries to $O$ as possible.

\begin{comment}
The most prominent among these is the bounded-error oracle, which models the following setting: for each index $i\in\sbra{N}$ there is a true label $x_i \in \{0,1\}$, and a  query to $i$ returns a bit $\tilde{x}_i$ satisfying
$\Pr[\tilde{x}_i = x_i] \ge \frac{9}{10}$. 
The Bernoulli oracle subsumes the membership oracle and the bounded-error
oracle as special cases.
When $\Delta=1/2$, each $p_i$ is either $0$ or $1$, so every query
exactly reveals whether the distribution is positive or negative,
which is precisely the membership oracle.
When $\Delta=\Theta(1)$, interpreting the outcome $\ket{0}$ as
'positive' and $\ket{1}$ as 'negative' yields a procedure that
identifies the correct label with probability at least $1/2+\Delta$,
which is precisely the bounded-error oracle.
\end{comment}

The idea of extending quantum algorithms to oracles beyond the exact membership model is not without precedent. Quantum search has been studied with the bounded-error oracle~\cite{hoyer2003quantum}, the noisy oracle~\cite{rosmanis2023quantum}, and the neutral oracle~\cite{kociumaka2025near}. Quantum $k$-minimum finding has been extended to the untrustworthy oracle~\cite{quek2020robust}, the bounded-error oracle~\cite{wangtcs2024quantum}, and the approximate oracle~\cite{gao_et_al:LIPIcs.ESA.2025.51}. 

Meanwhile, a similar form of quantum query access~\eqref{eq: input oracle} has been applied to several problems over Bernoulli distributions. These include estimating the mean of a single distribution~\cite{10.5555/646252.686013,hamoudi2021quantum,kothari2023mean,nayak1999lower}, testing properties between two distributions~\cite{belovs2019quantum,gilyen2020distributional,luo2024succinct,canonne_et_al:LIPIcs.TQC.2025.7}, and distinguishing distributions of arbitrarily close bias~\cite{aaronson2011advice}. All of these results, however, concern properties of individual distributions or pairwise comparisons. By contrast, quantum counting with the Bernoulli oracle concerns a global quantity: the fraction $\rho$ of distributions whose bias exceeds $1/2+\Delta$. 

With this quantum oracle model in place, the natural question raises: 
\begin{center}
    \emph{Can quantum counting with the Bernoulli
oracle outperform classical sampling}?
\end{center}

Here, we answer this question in the affirmative: quantum counting generalizes to the Bernoulli oracle while preserving its quadratic speedup, up to a logarithmic factor. A near-matching lower bound confirms that this speedup is essentially optimal. 

\subsection{Definitions of the Input Oracles}

In this section, we define the three input oracles relevant to this work and
establish the containment chain among them, which transfers query-complexity
bounds between the corresponding counting problems.

\begin{enumerate}
    \item \textbf{Membership oracle.}
    Let $f\colon [m]\to\{0,1\}$ be a function. The algorithm has query
    access to a unitary $O_f$ acting on an index register $A$ and an answer
    register $B$ as
    \begin{align}
        O_f \ket{i}_A \ket{0}_B = \ket{i}_A \ket{f(i)}_B.
    \end{align}

    \item \textbf{Bounded-error oracle.}
    Let $f\colon [m]\to\{0,1\}$ be a function. The algorithm has query 
    access to a unitary $O_f^e$ acting on an index register $A$ and an answer
    register $B$ as
    \begin{align}
        O_f^e \ket{i}_A \ket{0}_B=\ket{i}_A \bigl(\sqrt{p_i^e} \ket{f(i)}_B+ \sqrt{1-p_i^e}\ket{\bar{f(i)}}_B\bigr),
    \end{align}
    where $p_i^e \ge \frac{9}{10}$ denotes the probability that a single
query to index $i$ returns the correct label~\cite{hoyer2003quantum}.

    \item \textbf{Bernoulli oracle.}
    As introduced above, we are given $m$ Bernoulli distributions $\mathcal{D} = \{D_1, \dots, D_m\}$ on $\{0,1\}$ with unknown parameters
    $p_1, \dots, p_m \in [0,1]$, where $\Pr[D_i = 1] = p_i$. Each
    distribution is either \emph{positive} ($p_i \ge 1/2 + \Delta$) or
    \emph{negative} ($p_i \le 1/2 - \Delta$), with
    $\Delta \in (0, \frac{1}{2}]$. The algorithm has query access to a
    unitary $O$ acting as
    \begin{align}
        O \ket{i}_A \ket{\vec{0}}_{B,C} = \ket{i}_A \bigl( \sqrt{p_i} \ket{1}_B \ket{\varphi_{i,1}}_C + \sqrt{1-p_i} \ket{0}_B \ket{\varphi_{i,0}}_C \bigr).
    \end{align}
\end{enumerate}

The three oracles are nested. Setting $p_i^e=1$ in the bounded-error oracle
recovers the membership oracle. Moreover, a single query to the
bounded-error oracle returns a sample of a Bernoulli distribution: it
returns $f(i)$ with probability $p_i^e\ge\frac{9}{10}$ and
$\bar{f(i)}$ otherwise, so the bias of the returned bit lies at distance
at least $\frac{2}{5}$ from $\frac{1}{2}$. Hence
\begin{align}\label{eq:oracle-hierarchy}
    \text{membership}\;\subset\;\text{bounded-error}\;\subset\;\text{Bernoulli},
\end{align}
where membership corresponds to the Bernoulli oracle with gap
$\Delta=\frac{1}{2}$ (equivalently, $p_i\in\{0,1\}$), and bounded-error
to gap $\Delta=\frac{2}{5}$.

The following proposition makes this nesting precise at the level of the
counting problems, showing in particular that bounded-error counting
reduces to counting positive distributions.

\begin{proposition}[Reduction from bounded-error counting to
counting positive distributions]\label{prop:be-to-bernoulli}
Let $O_f^e$ be a bounded-error oracle for an unknown function
$f\colon[m]\to\{0,1\}$ with correctness probabilities
$p_i^e\ge\frac{9}{10}$ for all $i\in[m]$, let
$\rho=\frac{1}{m}\bigl|\cbra{i\in[m]:f(i)=1}\bigr|$ be the fraction of
ones of $f$, and for each $i\in[m]$ let $D_i$ be the distribution of the
bit returned by a single query to index $i$. Then
$\mathcal{D}=\{D_1,\dots,D_m\}$ is a valid Bernoulli counting instance
with gap $\Delta=\frac{2}{5}$, in which $D_i$ is positive if and only if
$f(i)=1$. Moreover, each query to the Bernoulli oracle of $\mathcal{D}$
is simulated by one query to $O_f^e$. Consequently, any quantum algorithm that
estimates the fraction of positive distributions to within additive
error $\epsilon$ with success probability $1-\delta$ using $q$ queries
estimates $\rho$ with the same error and success probability, using $q$
queries to $O_f^e$.
\end{proposition}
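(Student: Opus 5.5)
The plan is to verify the three claims in order: validity of the instance, simulation of the oracle, and transfer of the guarantee. All three are essentially definitional.

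\emph{Validity and the positive/negative correspondence.} For each $i$, a single query to $O_f^e$ on $\ket{i}_A\ket{0}_B$ followed by a measurement of $B$ returns $f(i)$ with probability $p_i^e$ and $\overline{f(i)}$ with probability $1-p_i^e$. Hence $D_i$ is Bernoulli with
\begin{align}
p_i=\Pr[D_i=1]=
\begin{cases}
p_i^e, & \text{if } f(i)=1,\\
1-p_i^e, & \text{if } f(i)=0.
\end{cases}
\end{align}
Since $p_i^e\ge \frac{9}{10}=\frac12+\frac25$, we get $p_i\ge\frac12+\frac25$ when $f(i)=1$ and $p_i\le\frac1{10}=\frac12-\frac25$ when $f(i)=0$. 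So every $D_i$ is either positive or negative with gap $\Delta=\frac25\in(0,\frac12]$, and $D_i$ is positive exactly when $f(i)=1$. The fraction of positive distributions is therefore precisely $\rho$.

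\emph{Oracle simulation.} I will exhibit the Bernoulli oracle of $\mathcal D$ directly as $O_f^e$, taking the register $C$ to be trivial (a one-dimensional register), or equivalently tensoring with an idle ancilla $C$ left in $\ket{\vec 0}_C$. Expanding $O_f^e\ket{i}\ket{0}$ in the computational basis of $B$ gives
\begin{align}
O_f^e\ket{i}_A\ket{0}_B=\ket{i}_A\bigl(\sqrt{p_i}\ket{1}_B+\sqrt{1-p_i}\ket{0}_B\bigr),
\end{align}
with the $p_i$ above. For example, if $f(i)=1$, the coefficient of $\ket1$ is $\sqrt{p_i^e}=\sqrt{p_i}$; if $f(i)=0$, it is $\sqrt{1-p_i^e}=\sqrt{p_i}$. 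This matches the required form with $\ket{\varphi_{i,0}}=\ket{\varphi_{i,1}}=\ket{\vec 0}_C$. The oracle $O$ is a unitary specified only by its action on $\ket{i}\ket{\vec0}$, and any unitary extension is admissible. So $O:=O_f^e\otimes I_C$ is a valid Bernoulli oracle, and every use of $O$ (and likewise of $O^\dagger$, namely $(O_f^e)^\dagger$) costs exactly one query.

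\emph{Transfer of guarantees.} Given an algorithm $\mathcal A$ that estimates the positive fraction within $\epsilon$ with probability $1-\delta$ using $q$ queries for every valid Bernoulli instance, run $\mathcal A$ with each oracle call replaced by $O_f^e\otimes I_C$. By the two steps above, this is a genuine run of $\mathcal A$ on the valid instance $\mathcal D$ with gap $\frac25$. Its output $\hat\rho$ therefore satisfies $|\hat\rho-\rho|\le\epsilon$ with probability at least $1-\delta$, using $q$ queries to $O_f^e$.

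The only subtle point is the second step. The Bernoulli model fixes the oracle only on the input state $\ket{\vec0}_{B,C}$, so I must make sure that the algorithm's guarantee holds for an arbitrary unitary extension and arbitrary garbage states $\ket{\varphi_{i,b}}$, including trivial ones. I would state this convention explicitly. If one insists on a nontrivial $C$, an idle ancilla suffices.
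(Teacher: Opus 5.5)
Your proof is correct and follows the paper's argument: the same case computation of $p_i$ giving gap $\frac{2}{5}$ with $D_i$ positive exactly when $f(i)=1$, followed by the same one-for-one substitution of oracle calls. The only difference is in the simulation step. The paper appends a qubit $C$ and applies a CNOT from $B$ to $C$, so that $\ket{\varphi_{i,b}}=\ket{b}$. You leave $C$ idle with $\ket{\varphi_{i,0}}=\ket{\varphi_{i,1}}=\ket{\vec{0}}$, which is an equally valid instance of Eq.~\eqref{eq: input oracle}. You are also right to make explicit that the algorithm's guarantee must hold for arbitrary admissible garbage states and unitary extensions; the paper relies on this implicitly as well.
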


\begin{proof}
A query to index $i$ returns $f(i)$ with probability
$p_i^e\ge\frac{9}{10}$ and $\bar{f(i)}$ with the remaining probability,
so $D_i$ is a Bernoulli distribution with parameter
\begin{equation}\label{eq:induced-bias}
p_i=\begin{cases}
p_i^e, & f(i)=1,\\
1-p_i^e, & f(i)=0.
\end{cases}
\end{equation}
Thus $p_i\ge\frac{9}{10}=\frac{1}{2}+\frac{2}{5}$ when $f(i)=1$, and
$p_i\le\frac{1}{10}=\frac{1}{2}-\frac{2}{5}$ when $f(i)=0$. Hence
$\mathcal{D}$ is a valid Bernoulli counting instance with gap
$\Delta=\frac{2}{5}$ and $D_i$ is positive exactly when $f(i)=1$, so
that
\begin{equation}\label{eq:answer-preserved}
\frac{1}{m}\,\bigl|\cbra{i : D_i \ \text{is positive}}\bigr|
\;=\;
\frac{1}{m}\,\bigl|\cbra{i : f(i)=1}\bigr|
\;=\; \rho .
\end{equation}

For the simulation, apply $O_f^e$ with the answer register $B$
initialized to $\ket{0}$, append an auxiliary qubit $C$ initialized to
$\ket{0}$, and apply a CNOT from $B$ to $C$. The resulting state is
\[
\ket{i}_A\Bigl(\sqrt{p_i}\,\ket{1}_B\ket{1}_C
+\sqrt{1-p_i}\,\ket{0}_B\ket{0}_C\Bigr),
\]
which is of the form of Eq.~\eqref{eq: input oracle} with
$\ket{\varphi_{i,1}}=\ket{1}$ and $\ket{\varphi_{i,0}}=\ket{0}$. The
simulation costs one query to $O_f^e$ plus one CNOT gate and one
auxiliary qubit, which yields the last claim.
\end{proof}

The inclusion chain~\eqref{eq:oracle-hierarchy}, made precise by
Proposition~\ref{prop:be-to-bernoulli}, translates into relations between
query complexities. Upper bounds transfer downward: an upper bound for
the Bernoulli oracle with gap $\Delta$ applies to the bounded-error
oracle (via $\Delta=\frac{2}{5}=\Theta(1)$) and to the membership oracle
(via $\Delta=\frac{1}{2}$). Lower bounds transfer upward: the
$\Omega\bigl(\frac{\sqrt{\rho}}{\epsilon}\bigr)$ lower bound for
membership counting~\cite{10.5555/646252.686013,nayak1999lower} implies
the bound for bounded-error counting, which is matched by our
$\Omega\bigl(\frac{\sqrt{\rho}}{\Delta\epsilon}\bigr)$ lower bound at
$\Delta=\Theta(1)$. Table~\ref{tab:oracle} summarizes these
specializations.
\subsection{Our Results}

We now state our main results, which together give an essentially complete
characterization of quantum counting with the Bernoulli oracle.
Table~\ref{tab:oracle} summarizes the query complexity under each access model,
with our bounds highlighted.

\begin{table}[htbp]
\centering
\renewcommand{\arraystretch}{2.2}
\caption{Quantum counting under different access models.}
\begin{tabular}{|l|c|c|c|}
\hline
\textbf{Access model} & \textbf{Gap $\Delta$} & \textbf{Upper bound} & \textbf{Lower bound}\\[4pt]
\hline
Classical sampling \cite{lee2021uncertainty} 
& $(0,1/2]$
& \multicolumn{2}{c|}{$\Theta(\frac{\rho}{\Delta^2\epsilon^2})$}\\[4pt]
\hline
Membership \cite{10.5555/646252.686013,nayak1999lower}
& $1/2$
&\multicolumn{2}{c|}{$\Theta(\frac{\sqrt{\rho}}{\epsilon})$}\\[4pt]
\hline
Bounded-error\textbf{(this work)}
& $\Theta(1)$
& $\tilde{O}(\frac{\sqrt{\rho}}{\epsilon}+\frac{1}{\sqrt{\epsilon}})$
& $\Omega(\frac{\sqrt{\rho}}{\epsilon})$\\[4pt]
\hline
Bernoulli\textbf{(this work)}
& $(0,1/2]$
& $\tilde{O}(\frac{\sqrt{\rho}}{\Delta\epsilon}+\frac{1}{\Delta\sqrt{\epsilon}})$
& $\Omega(\frac{\sqrt{\rho}}{\Delta\epsilon})$\\
\hline
\end{tabular}
\label{tab:oracle}
\end{table}

\begin{theorem}[Upper bound, informal version of Theorem~\ref{the: Adaptive fraction estimator}]
    There exists a quantum algorithm that, given quantum query access to the Bernoulli oracle with bias $\Delta$, estimates the fraction $\rho$ of positive distributions to within additive error $\epsilon$ with success probability at least $1-\delta$, using $\tilde{O}\!\left(\left(\frac{\sqrt{\rho}}{\Delta\epsilon}+\frac{1}{\Delta\sqrt{\epsilon}}\right)\log\frac{1}{\delta}\right)$ queries.
\end{theorem}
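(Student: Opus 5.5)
The plan is to turn the Bernoulli oracle into an approximate membership oracle coherently, and then run counting by amplitude estimation on top of it. The two sources of cost, the $1/\Delta$ overhead and the additive $1/(\Delta\sqrt{\epsilon})$ term, come from two separate steps.

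\emph{Step 1 (coherent bias amplification).} Let $\Pi_1=\ket{1}\!\bra{1}_B\otimes I_C$. Conditioned on $\ket{i}_A$, the operator $\Pi_1 O\,(\ket{\vec 0}\!\bra{\vec 0})$ is a block encoding of the scalar $\sqrt{p_i}$. Equivalently, the reflection-based walk $W=O(2\ket{\vec0}\!\bra{\vec0}-I)O^\dagger(2\Pi_1-I)$ has eigenphases determined by $p_i$. I would use QSVT with an odd polynomial $P$ of degree $d=O\bigl(\frac{1}{\Delta}\log\frac{1}{\eta}\bigr)$ that approximates a threshold function. Specifically, $|P(x)|\ge\sqrt{1-\eta}$ for $x\ge\sqrt{1/2+\Delta}$ and $|P(x)|\le\sqrt{\eta}$ for $x\le\sqrt{1/2-\Delta}$; such polynomials exist because $\sqrt{1/2+\Delta}-\sqrt{1/2-\Delta}=\Theta(\Delta)$. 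Since QSVT acts index-wise, this gives a unitary $U$, using $d$ queries to $O$ and $O^\dagger$ (controlled on $A$ only through $O$), such that
\[
U\ket{i}\ket{0}=\ket{i}\bigl(\alpha_i\ket{1}\ket{\psi_{i,1}}+\beta_i\ket{0}\ket{\psi_{i,0}}\bigr),
\]
with $|\alpha_i|^2\ge 1-\eta$ for positive $i$ and $|\alpha_i|^2\le\eta$ for negative $i$. The superposition over $i$ is never collapsed.

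\emph{Step 2 (two-stage adaptive amplitude estimation).} Prepare $\frac{1}{\sqrt m}\sum_i\ket{i}$, apply $U$, and mark the flag $\ket{1}$. The success probability is $a=\frac1m\sum_i|\alpha_i|^2$, which satisfies $|a-\rho|\le\eta$. Choosing $\eta=\epsilon/3$ costs only a $\log(1/\epsilon)$ factor in $d$. Amplitude estimation with $M$ iterations returns $\hat a$ with $|\hat a-a|\le O\bigl(\frac{\sqrt{a}}{M}+\frac{1}{M^2}\bigr)$ with constant probability. In the first stage, I would get a coarse estimate $\tilde a$ with $a\le\tilde a\le O(a+\epsilon)$. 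Doing this by a doubling search over $M$ costs $O\bigl(1/\sqrt{\max(a,\epsilon)}\bigr)$ iterations. In the second stage, I would set $M=\Theta\bigl(\sqrt{\tilde a}/\epsilon+1/\sqrt{\epsilon}\bigr)$, which gives error at most $\epsilon/3$. Since $\sqrt{\tilde a}\le O(\sqrt{\rho}+\sqrt{\epsilon})$, the total number of iterations is $O\bigl(\frac{\sqrt\rho}{\epsilon}+\frac1{\sqrt\epsilon}\bigr)$. Each iteration costs $O(d)$ queries, which gives $\tilde O\bigl(\frac{\sqrt\rho}{\Delta\epsilon}+\frac{1}{\Delta\sqrt\epsilon}\bigr)$. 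The median of $O(\log\frac1\delta)$ repetitions boosts the success probability to $1-\delta$; the first stage also needs repetition with a union bound over the doubling levels, which is absorbed into the $\tilde O$.

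\emph{Main obstacle.} The hardest step is Step 1. The issue is that the garbage states $\ket{\varphi_{i,b}}$ are unknown and $i$-dependent. I would handle this by working in the qubitization picture: for each fixed $i$, the pair of projectors $\ket{\vec0}\!\bra{\vec0}$ and $O^\dagger\Pi_1 O$ define a two-dimensional invariant subspace whose singular value is $\sqrt{p_i}$. QSVT then applies the same polynomial in every sector, uniformly in $i$, regardless of the garbage. A second point to check is that errors from an imperfect threshold do not bias $a$ by more than $\eta$. This holds because $a-\rho$ is an average of per-index errors each bounded by $\eta$, and this averaging is exactly why the additive approximation is enough.
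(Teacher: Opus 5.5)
Your proposal is correct and follows essentially the same route as the paper. Both use a QSVT threshold transform on the block-encoded singular value $\sqrt{p_i}$ (the paper's \textbf{RST} subroutine), then a coarse amplitude-estimation run whose output sets the precision of a second, fine run, then median boosting. The differences are cosmetic. You apply an odd threshold polynomial to $\tilde{\Pi}O\Pi$ directly, where the paper uses an even rectangle polynomial with the auxiliary $D$ register. You use a doubling search for the coarse stage, where the paper runs a single QAE at precision $\Theta(\sqrt{\epsilon})$; that already fits within the $1/(\Delta\sqrt{\epsilon})$ budget, so the doubling search is not needed.
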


\begin{theorem}[Lower bound, informal version of Corollary~\ref{cor:bernoulli-lower}]
    Any quantum algorithm that estimates $\rho$ to within additive error $\epsilon$ with success probability at least $2/3$, given quantum query access to a Bernoulli oracle with bias $\Delta$,  requires $\Omega\!\left(\frac{\sqrt{\rho}}{\Delta\epsilon}\right)$ queries.
\end{theorem}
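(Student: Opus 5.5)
We prove the lower bound by reducing from a composed problem and applying the adversary method. The outer problem is approximate counting with an exact membership oracle. By \cite{10.5555/646252.686013,nayak1999lower}, estimating the fraction $\rho$ of ones of $f\colon[m]\to\{0,1\}$ to additive error $\epsilon$ needs $\Omega(\sqrt{\rho}/\epsilon)$ queries. The inner problem is deciding whether a single Bernoulli distribution has bias $1/2+\Delta$ or $1/2-\Delta$. With the state-preparation oracle of Eq.~\eqref{eq: input oracle} this needs $\Omega(1/\Delta)$ queries, which is the standard amplitude-estimation lower bound \cite{nayak1999lower}.

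Given an outer instance $f$, we build a Bernoulli instance by setting $p_i=1/2+\Delta$ if $f(i)=1$ and $p_i=1/2-\Delta$ otherwise. The fraction of positive distributions is then exactly $\rho$, so any Bernoulli counting algorithm solves the composed problem $\mathrm{Count}_\epsilon\circ\mathrm{Bias}_\Delta$. It remains to show that the query complexity of this composition is at least the product $\Omega(\sqrt{\rho}/\epsilon)\cdot\Omega(1/\Delta)$.

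The main obstacle is this composition step. The known multiplicative composition theorems for the general adversary bound (Reichardt; Lee--Mittal--Reichardt--\v{S}palek--Szegedy) assume a Boolean-valued inner function evaluated on discrete inputs. Here the inner task is a promise problem over continuous oracles, the $\ket{\varphi_{i,b}}$ garbage is arbitrary, and the outer function is non-Boolean and only needs to be solved approximately: it is a relation with $\epsilon$-tolerance on $\rho$. This is presumably the ``Boolean-over-average-case'' composition the abstract advertises.

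To get around this, I would first discretize the inner problem. I would realize bias $1/2\pm\Delta$ as the uniform fraction of ones in a string $y\in\{0,1\}^n$ queried through a standard oracle, so that Bias becomes the promise majority-gap function with $\Omega(1/\Delta)$ adversary value via the Ambainis weighted relation. A uniform superposition query over $y$ simulates the Bernoulli oracle with one query, which preserves the reduction.

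Next I would use a positive-weight adversary for the outer problem. Nayak--Wu gives a relation between inputs with Hamming weights $\rho m$ and $(\rho+2\epsilon)m$, yielding $\Omega(\sqrt{\rho}/\epsilon)$. I would compose adversary matrices by the tensor-product construction $\Gamma_{\text{outer}}\otimes\Gamma_{\text{inner}}$, following H\o yer--Lee--\v{S}palek for positive weights. Those weights are nonnegative, so the spectral-norm multiplicativity argument goes through without needing the outer function to be total Boolean: only the pairs in the relation, i.e.\ the outputs that must be distinguished, matter.

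The result is $\mathrm{ADV}(\mathrm{Count}\circ\mathrm{Bias})\ge\mathrm{ADV}(\mathrm{Count})\cdot\mathrm{ADV}(\mathrm{Bias})=\Omega(\sqrt{\rho}/(\Delta\epsilon))$. Because the adversary bound lower-bounds bounded-error query complexity at success probability $2/3$, this gives the claim. Specializing to $\Delta=2/5$ via Proposition~\ref{prop:be-to-bernoulli} recovers the bounded-error statement.
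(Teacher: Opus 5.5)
Your argument is correct, but it handles the inner layer differently from the paper. The paper keeps the inner problem probabilistic. It passes to an oracle holding $R$ i.i.d.\ samples from $P$ or $Q$ and applies the average-case adversary bound of Lemma~\ref{lem: the average-case quantum adversary lower bound method}. Its inner blocks are $G^{\otimes R}$ with $G\vec{q}=\vec{p}$, so $d_H(P,Q)$ enters through $\|G\circ\alpha\|$; this is its new Boolean-over-average-case composition. You instead make the inner instance deterministic. A string $y_i\in\{0,1\}^n$ of weight exactly $(\tfrac12\pm\Delta)n$, queried once in uniform superposition over $j$ (with $j$ placed in the garbage register $C$), reproduces Eq.~\eqref{eq: input oracle} exactly with $p_i=\tfrac12\pm\Delta$, and $O^\dagger$ also costs one query. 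The problem then becomes an ordinary composition of two partial Boolean functions.

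The composed matrix you need is not literally $\Gamma_{\mathrm{outer}}\otimes\Gamma_{\mathrm{inner}}$. It is the block form used in Step~2 of the proof of Theorem~\ref{thm:composition}: the outer entry times $\bigotimes_i\Gamma^{\tilde{x}_i,\tilde{y}_i}$, with identity blocks wherever the outer labels agree. You can also skip the composition theorem altogether. Take the single Ambainis relation in which the positive blocks of $x$ are contained in those of $y$, each of the $d$ newly positive blocks is obtained by switching on $2\Delta n$ zeros, and all other blocks are equal. This gives directly
\[
\sqrt{\frac{(1-\rho)m}{d}\cdot\frac{\rho m+d}{d}}\cdot\frac{\tfrac12+\Delta}{2\Delta},
\]
which is $\Omega(\sqrt{\rho}/(\epsilon\Delta))$ for $d=\Theta(\epsilon m)$. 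Take $d\ge 3\epsilon m$ so that an $\epsilon$-estimate actually separates the two weights.

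Your route needs only positive-weight tools, and the promise holds exactly. Your explicit simulation also makes clear that the reduction must go this way: the discrete oracle simulates the Bernoulli oracle. In the paper's i.i.d.\ model, by contrast, the induced biases are empirical frequencies. These fall on the wrong side of $\tfrac12\pm\Delta$ with constant probability unless one widens the gap and adds a concentration step. In exchange, the paper's route yields $\Omega(\sqrt{\rho}/(d_H(P,Q)\,\epsilon))$ for counting among an arbitrary pair $P,Q$ on $[n]$, with the Bernoulli case as one instance.

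Both arguments actually produce the factor $\sqrt{(1-\rho)(\rho+\epsilon)}$, so the $\sqrt{\rho}$ form tacitly assumes $1-\rho=\Omega(1)$. For $\rho$ near $1$, counting the negative distributions instead shows that the $\sqrt{\rho}$ form cannot hold.
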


The near-matching lower bound shows that the upper bound is tight up to logarithmic factors across the full range of $\Delta$. As discussed above, the bounded-error oracle corresponds to
$\Delta = \Theta(1)$. Hence our general bounds immediately specialize to the
following.

\begin{corollary}[Quantum counting with the bounded-error oracle]\label{cor:bounded-error}
For the bounded-error oracle~\cite{hoyer2003quantum}, which corresponds to the
special case $\Delta=\Theta(1)$, the above bounds specialize to an upper bound of
$\tilde{O}\!\big(\frac{\sqrt{\rho}}{\epsilon}+\frac{1}{\sqrt{\epsilon}}\big)$
and a lower bound of $\Omega(\frac{\sqrt{\rho}}{\epsilon})$.
\end{corollary}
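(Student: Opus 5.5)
The corollary follows by specializing the two main theorems to constant gap, using Proposition~\ref{prop:be-to-bernoulli} to move between the bounded-error and Bernoulli models; the upper and lower bounds are handled separately.

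\emph{Upper bound.} We start from a bounded-error oracle $O_f^e$ with $p_i^e\ge\frac{9}{10}$. Proposition~\ref{prop:be-to-bernoulli} shows that a single query to $O_f^e$, followed by one CNOT, simulates one query to the Bernoulli oracle of an instance $\mathcal{D}$ with gap $\Delta=\frac{2}{5}$. In this instance the positive distributions are exactly the indices with $f(i)=1$, so the target fraction is unchanged. We then run the algorithm of the informal upper-bound theorem (Theorem~\ref{the: Adaptive fraction estimator}) on $\mathcal{D}$ with $\Delta=\frac{2}{5}$ and a constant $\delta$, say $\delta=1/3$. Since $1/\Delta=\frac{5}{2}=O(1)$ and $\log\frac{1}{\delta}=O(1)$, the query count
\[
\tilde{O}\!\left(\left(\frac{\sqrt{\rho}}{\Delta\epsilon}+\frac{1}{\Delta\sqrt{\epsilon}}\right)\log\frac{1}{\delta}\right)
\]
becomes $\tilde{O}\big(\frac{\sqrt{\rho}}{\epsilon}+\frac{1}{\sqrt{\epsilon}}\big)$. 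The simulation adds only a constant factor. If the corollary is read with a general constant $\Delta$ in place of $\frac{2}{5}$, the same substitution applies directly.

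\emph{Lower bound.} The cleanest argument uses the hierarchy~\eqref{eq:oracle-hierarchy} in the upward direction. The membership oracle is the special case $p_i^e=1$ of the bounded-error oracle. Hence any bounded-error counting algorithm also solves membership counting with the same number of queries. The known $\Omega\big(\frac{\sqrt{\rho}}{\epsilon}\big)$ membership lower bound~\cite{10.5555/646252.686013,nayak1999lower} therefore transfers immediately. As a consistency check, Corollary~\ref{cor:bernoulli-lower} at $\Delta=\Theta(1)$ gives the same $\Omega\big(\frac{\sqrt{\rho}}{\epsilon}\big)$.

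\emph{Main subtlety.} The only point needing care is the direction of the reductions. A Bernoulli lower bound does not automatically apply to the narrower bounded-error model, so the lower bound must come from the membership special case rather than from the Bernoulli bound. I also need to check that the $\tilde{O}$ in Theorem~\ref{the: Adaptive fraction estimator} hides only logarithmic factors in $\rho$, $\epsilon$ and $\Delta$, so that fixing $\Delta$ to a constant introduces no hidden dependence.
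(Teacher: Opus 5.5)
Your proposal is correct and follows the paper's own reasoning. The upper bound comes from simulating a gap-$\Delta=2/5$ Bernoulli oracle via Proposition~\ref{prop:be-to-bernoulli} and substituting into Theorem~\ref{the: Adaptive fraction estimator}, and the lower bound transfers upward from membership counting (the cited bound, or Lemma~\ref{lem:counting-lower}), with Corollary~\ref{cor:bernoulli-lower} at $\Delta=\Theta(1)$ serving only as a matching check. Your caution that a lower bound for the broader Bernoulli class does not automatically restrict to the narrower bounded-error class is well placed, and the membership route you chose is the rigorous one.
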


\subsection{Techniques}
\paragraph{Upper bound.}

A direct approach is to first determine whether each distribution is positive or
negative, and then supply these binary labels to a quantum amplitude estimation
procedure to recover the fraction $\rho$. The difficulty is that distinguishing
positive from negative distributions cannot, in general, be performed
perfectly with a bounded number of queries: any decision subroutine errs with some nonzero probability, and this error propagates through the outer estimation,
compromising the accuracy of the final estimate.

The standard technique for handling imperfect decisions is to repeat the subroutine $O(\log(1/\delta))$ times and take the median, thereby reducing the error probability to $\delta$. This repetition-and-median approach, however, requires \emph{measuring} each distribution's output individually. Each measurement collapses the superposition over the index register, destroying the coherence that amplitude estimation relies on. Classical error reduction and quantum amplitude estimation are thus incompatible when composed directly.

We resolve this tension by performing error reduction \emph{coherently}.
Recall that each query to the oracle~\eqref{eq: input oracle} produces a state whose $B$-register amplitudes encode the bias $p_i$ of the queried distribution.
Ideally, we would like to apply a transformation to this state whose effect is a step function of the bias: if $p_i \ge 1/2+\Delta$, the $B$~register becomes
$\ket{0}$ with certainty; if $p_i \le 1/2-\Delta$, it becomes $\ket{1}$ with
certainty. This would eliminate the decision error entirely while preserving the
superposition over the index register. A perfect step function, however, is not a
polynomial of the amplitudes and therefore cannot be realized as a finite quantum
circuit.

The Quantum Singular Value Transformation (QSVT)~\cite{gilyen2019quantum} applies
polynomial transformations to the singular values of a block-encoded operator.
For the oracle~\eqref{eq: input oracle}, the amplitudes can be arranged as singular values via a suitable block encoding, so that a polynomial applied to the singular values translates to a polynomial transformation of the amplitudes. We thus approximate the step function by a polynomial of degree
$O(\frac{1}{\Delta}\log\frac{1}{\epsilon})$. After the transformation, for a positive distribution the probability of measuring $\ket{1}$ in the output register is at least $1-\epsilon$, and for a negative distribution the probability of measuring $\ket{1}$ is at most $\epsilon$. The resulting operation, which we call $\textbf{RST}(\Delta,\epsilon)$ (Algorithm~\ref{alg: Robust single coin transform}, Lemma~\ref{lem: robust single coin transform}), which performs error reduction entirely within the coherent layer, requiring no intermediate measurement and avoiding any collapse of the superposition.

With $\textbf{RST}(\Delta,\epsilon)$ as a coherent error-reduction subroutine, we now assemble the complete counting algorithm (Algorithm~\ref{alg: Fraction estimator}), whose analysis yields Theorem~\ref{the: Adaptive fraction estimator}. The construction proceeds in three stages:
\begin{enumerate}
    \item \textbf{Coherent error reduction.} We apply the polynomial approximation
    of the step function via QSVT, yielding the subroutine
    $\textbf{RST}(\Delta,\epsilon)$ (Algorithm~\ref{alg: Robust single coin transform}),
    whose guarantees are stated in Lemma~\ref{lem: robust single coin transform}.
    This serves as a coherent counterpart to the classical repetition-and-median
    strategy.
    \item \textbf{Two-stage amplitude estimation.} We embed $\textbf{RST}$
    into an adaptive amplitude estimation scheme. A coarse
    estimate of $\rho$ is first obtained; if $\rho$ exceeds a threshold, the coarse
    estimate is used to calibrate the precision of a second, finer estimation run.
    \item \textbf{Confidence boosting.} Repeating the above procedure
    $\Theta(\log(1/\delta))$ times and taking the median boosts the success probability
    to $1-\delta$.
\end{enumerate}

The $\textbf{RST}$ subroutine requires $O(\frac{1}{\Delta}\log\frac{1}{\epsilon})$ queries to the Bernoulli oracle. Combined with the two-stage adaptive amplitude estimation, the overall query complexity is
$\tilde{O}\big(\frac{\sqrt{\rho}}{\Delta\epsilon}+\frac{1}{\Delta\sqrt{\epsilon}}\big)$ per run. The logarithmic factor arises from the polynomial approximation inside QSVT and the median-of-repetitions boosting.

Without coherent error reduction, one would be forced to measure each distribution before the counting phase, reducing the problem to classical sampling and forfeiting any quantum speedup. The QSVT-based approach operates entirely within the coherent layer, thereby recovering the quadratic speedup that distinguishes quantum counting from classical sampling.

\paragraph{Lower bound.}
We prove the lower bound by reducing the estimation problem to a composed
decision problem $\textbf{Count}\circ\textbf{DD}$, where the outer function
$\textbf{Count}$ asks whether an $m$-bit string has Hamming weight $\rho m$ or
$(\rho+\epsilon)m$, and each inner function $\textbf{DD}$ asks whether an unknown
distribution, accessed via $R$ independent samples, is $P$ or $Q$.

Known composition theorems for the quantum adversary method fall into two
categories: Boolean-Boolean~\cite{hoyer2005tight,hoyer2007negative} and
average-case-over-Boolean~\cite{belovs2018provably}. In both cases where
$\mathcal{H} = \mathcal{F} \circ (\mathcal{G}_1, \ldots, \mathcal{G}_m)$
with the inner functions $\mathcal{G}_i$ being standard Boolean functions,
the composed numerator factors as follows: 
\[
\delta_{\mathcal{P}}^\dagger \Gamma_{\mathcal{H}} \delta_{\mathcal{Q}}
= \delta^\dagger \Gamma_{\mathcal{F}} \delta \cdot \|\Gamma_{\mathcal{G}}\|^m.
\]

Our composition $\textbf{Count}\circ\textbf{DD}$ reverses this direction: the
outer function is a standard Boolean function, while each inner function is an
average-case problem that comes with its own block adversary matrices
$\Gamma_{\mathcal{G}}^{a,b}$ and distribution vectors $\delta^a,\delta^b$.
The tensor-product construction of the composed adversary matrix
$\Gamma_{\mathcal{H}}^{\tilde{x},\tilde{y}} = \Gamma_{\mathcal{F}}[\tilde{x},\tilde{y}]
\cdot \bigotimes_i \Gamma_{\mathcal{G}_i}^{\tilde{x}_i,\tilde{y}_i}$, with
distribution vectors
$\delta_{\mathcal{P}} = \frac{1}{\sqrt{2^m}} \bigoplus_{\tilde{x}}
\bigotimes_i \delta^{\tilde{x}_i}$, admits the following decomposition:
\[
\delta_{\mathcal{P}}^\dagger \Gamma_{\mathcal{H}} \delta_{\mathcal{Q}}
= \frac{1}{2^m} \sum_{\tilde{x},\tilde{y}} \Gamma_{\mathcal{F}}[\tilde{x},\tilde{y}]
  \prod_{i=1}^{m} (\delta^{\tilde{x}_i})^\dagger
                \Gamma_{\mathcal{G}_i}^{\tilde{x}_i,\tilde{y}_i}
                \delta^{\tilde{y}_i}.
\]
This decomposition reveals why the Boolean-over-average-case direction is not
covered by existing composition theorems. Two obstacles stand out:

\begin{enumerate}
    \item \textbf{Inner non-uniformity.} The inner products
    $(\delta^a)^\dagger \Gamma_{\mathcal{G}}^{a,b} \delta^b$ across different
    blocks $(a,b)\in\{0,1\}^2$ are not guaranteed to be equal. Some blocks may
    contribute a large inner product while others contribute a small one, and
    these discrepancies prevent the outer sum
    $\sum_{\tilde{x},\tilde{y}} \Gamma_{\mathcal{F}}[\tilde{x},\tilde{y}]
    \prod_i (\delta^{\tilde{x}_i})^\dagger \Gamma_{\mathcal{G}_i}^{\tilde{x}_i,
    \tilde{y}_i} \delta^{\tilde{y}_i}$ from being factored.
    
    \item  \textbf{Outer misalignment.}The distribution vectors
    $\delta_{\mathcal{P}},\delta_{\mathcal{Q}}$ assign uniform weight
    $1/\sqrt{2^m}$ to each outer label $\tilde{x}\in\{0,1\}^m$. For a
    general Boolean function, the principal eigenvectors of the adversary
    matrix $\Gamma_{\mathcal{F}}$ are not uniform over the outer labels.
    Consequently, the uniform vector arising from the distributions does not
    align with the principal eigendirections of $\Gamma_{\mathcal{F}}$,
    preventing $\|\Gamma_{\mathcal{F}}\|$ from factoring out of
    $\delta_{\mathcal{P}}^\dagger\Gamma_{\mathcal{H}}\delta_{\mathcal{Q}}$.
\end{enumerate}

These issues are inherent to the Boolean-over-average-case direction, which
places both distribution vectors and block adversary matrices at the inner
layer, a situation that neither of the known composition directions
encounters. We resolve them by combining the composition theorem with the specific structures of the adversary matrices for the counting and
distribution-distinguishing problems.

\textit{Resolution of Obstacle~1.} We construct inner adversary matrices that
make every block inner product equal to~$1$. Let $\vec{p}=(\sqrt{p_1},\dots,
\sqrt{p_n})^\top$ and $\vec{q}=(\sqrt{q_1},\dots,\sqrt{q_n})^\top$ be the
square-root probability vectors of $P$ and $Q$. Taking any matrix $G$ with
$G\vec{q}=\vec{p}$ (which always exists with $\|G\|=1$) and setting
$\Gamma_{\mathcal{G}}^{0,1}=G^{\otimes R}$,
$\Gamma_{\mathcal{G}}^{1,0}=(G^{\otimes R})^\dagger$, and
$\Gamma_{\mathcal{G}}^{0,0}=\Gamma_{\mathcal{G}}^{1,1}=I$, the tensor-power
structure gives $(\delta^0)^\dagger G^{\otimes R}\delta^1 =
(\vec{p}^{\,\dagger}G\vec{q})^R = 1$, and similarly $(\delta^b)^\dagger
I\delta^b = 1$. All block inner products are thus equal, resolving the first
obstacle.

\textit{Resolution of Obstacle~2.} The adversary matrix for $\textbf{Count}$
is a $0$-$1$ matrix ($\Gamma_{\mathcal{F}}[x,y]=1$ iff $x\le y$). For a $0$-$1$
matrix, the normalized all-ones vector already provides a lower bound on the
spectral norm: $\|\Gamma_{\mathcal{F}}\| \ge \mathbf{1}^T\Gamma_{\mathcal{F}}
\mathbf{1} / \|\mathbf{1}\|^2$. Because the outer distribution vectors are
uniform over labels, this all-ones direction coincides with the direction
from which $\delta_{\mathcal{P}}^\dagger\Gamma_{\mathcal{H}}
\delta_{\mathcal{Q}}$ extracts $ \mathbf{1}^T\Gamma_{\mathcal{F}}
\mathbf{1} / \|\mathbf{1}\|^2$, resolving the
misalignment.

With both obstacles removed, the composed adversary ratio factors cleanly into
the product of the Boolean adversary bound for $\textbf{Count}$ and the
average-case adversary factor for $\textbf{DD}$, yielding
$\Omega\big(\frac{\sqrt{\rho}}{d_H(P,Q)\,\epsilon}\big)$. Instantiating $P,Q$
as Bernoulli distributions with bias gap $2\Delta$ gives
$d_H(P,Q)=\Theta(\Delta)$ and the final $\Omega\big(\frac{\sqrt{\rho}}
{\Delta\epsilon}\big)$ lower bound.

Thus, by combining the composition theorem with the adversary matrix
structures of the two problems, we obtain a clean factorization of the
composed bound. The simplicity of the resulting expressions reflects how
naturally these structures fit into the composition framework.

\subsection{Related Works}

\paragraph{Quantum search and counting under non-standard oracles.}
Our work relates to a broader line of research that extends quantum algorithms beyond
the ideal membership oracle. Grover's search has been studied under several non-ideal
query models, including the bounded-error oracle~\cite{hoyer2003quantum}, a
depolarizing channel noise model~\cite{salas2008noise}, an oracle that may fail to
report marked elements~\cite{ambainis2012grover}, the noisy
oracle~\cite{rosmanis2023quantum}, and the neutral oracle~\cite{kociumaka2025near}.
Quantum $k$-minimum finding has similarly been extended to the untrustworthy
oracle~\cite{quek2020robust}, the bounded-error oracle~\cite{wangtcs2024quantum}, and
the approximate oracle~\cite{gao_et_al:LIPIcs.ESA.2025.51}.
Quantum approximate counting has also been studied under extended oracle models.
The state-generating oracle was introduced in~\cite{aaronson2020quantum} and further
studied in~\cite{belovs2026tight}, which established tight lower bounds for all
pairwise combinations of the membership, state-generating, and reflecting oracles.

\paragraph{Quantum property testing of distributions and quantum states.}
A central component of our lower bound reduces to the task of distinguishing two unknown
probability distributions $P$ and $Q$ given samples. This task falls into the broader
framework of distribution testing~\cite{rubinfeld2012taming,canonne2020survey},
which asks whether one or more unknown distributions satisfy a global property or are far
from satisfying it. Quantum algorithms are known to offer quadratic
speedups over classical methods for fundamental problems in this area, including uniformity
testing, closeness testing, and independence
testing~\cite{chakraborty2010new,bravyi2011quantum,montanaro2015quantum,
gilyen2020distributional,luo2024succinct,canonne_et_al:LIPIcs.TQC.2025.7}. Property testing
of quantum states further generalizes this framework: classical distributions correspond to
quantum states that are diagonal in the computational basis, so any quantum-state tester
automatically yields a distribution tester. Foundational results in this direction include
quantum state discrimination~\cite{chefles2000quantum,barnett2009quantum,bae2015quantum},
productness testing~\cite{harrow2013testing,soleimanifar2022testing}, mixedness
testing~\cite{childs2007weak,o2015quantum}, and closeness
testing~\cite{buhrman2001quantum,gilyen2022improved,liu2025quantum}.

\paragraph{Property testing of collections of distributions and quantum states.}
The results surveyed above focus on testing properties of one or two objects. A natural and significantly more challenging generalization is \emph{testing collections of distributions}~\cite{levi2013testing}: given $m$ distributions $\mathcal{D}:\{D_1,\ldots,D_m\}$ over the same domain, the task is to test whether $\mathcal{D}$ satisfies a global property or is far from it. In the query access model, one specifies an index $i\in\{1,\ldots,m\}$ and receives a sample from $D_i$. In the sampling model, the algorithm receives a pair $(i,j)$ with $i$ drawn uniformly and $j$ sampled from $D_i$. Our problem can be interpreted as a special case of this framework, where each $D_i$ is a Bernoulli distribution over $\{0,1\}$, and the parameter of interest is the fraction $\rho$ of distributions whose bias exceeds $1/2$. Several properties of collections have been studied under this model in the classical and quantum setting: equivalence testing across multiple distributions~\cite{levi2013testing,diakonikolas2016new,diakonikolas2021optimal}, clusterability testing~\cite{levi2013testing,levi2014testing}, and testing similar means~\cite{levi2014testing,gao2026quantum}. For quantum states, equivalence testing across multiple states has also been investigated~\cite{yu2021sample,fanizza2023testing}.

\section{Preliminaries} \label{Preliminaries}

\subsection{Notations}
The symbol $\circ$ denotes entrywise (Hadamard) product, $\Abs{A}$ denotes the spectral norm of a matrix $A$, and $\Gamma[x,y]$ denotes the $(x,y)$-entry of a matrix $\Gamma$, where $x \in [n]$ and $y \in [m]$ for an $n \times m$ matrix. For index sets $X \subseteq [n]$ and $Y \subseteq [m]$, $\Gamma^{X,Y}$ denotes the submatrix obtained by restricting $\Gamma$ to rows in $X$ and columns in $Y$. For a function $F: \{0,1\}^m \to \{0,1\}$ and a function $G: \mathcal{X} \to \{0,1\}$, the $m$-fold composition $F \circ G^m: \mathcal{X}^m \to \{0,1\}$ is defined by $(F \circ G^m)(x_1,\ldots,x_m) = F(G(x_1),\ldots,G(x_m))$ norm of $A$.

\subsection{Quantum Subroutines}
We first introduce quantum amplitude estimation which can estimate a probability $a$ quadratically faster than classical sampling.

\begin{lemma}[Quantum amplitude estimation, Theorem 12 in \cite{brassard2002quantum}]\label{lem: amplitude estimation}
    Let $A$ be a unitary and $P$ be a projector such that $A|0\rangle = \sqrt{a}|\Psi_0\rangle + \sqrt{1-a}|\Psi_1\rangle$, where $a \in [0,1]$ and $P|\Psi_0\rangle=|\Psi_0\rangle$, $P|\Psi_1\rangle=0$. For $t\in\rbra{0,1}$, the quantum amplitude estimation algorithm $\text{QAE}\rbra{A\ket{0},P,t}$ outputs $\ket{\hat{a}}$. Upon measuring this output, the measurement result $\hat{a}$ satisfies
    \begin{align}
        \Pr\left[|\hat{a} - a| \leq t\sqrt{a(1-a)} + t^2\right] \geq \frac{4}{5}.
    \end{align}
    The algorithm uses $O(\frac{1}{t})$ applications of $A$ and $P$.
\end{lemma}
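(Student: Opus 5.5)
The plan follows the standard Brassard--H{\o}yer--Mosca--Tapp argument: phase estimation applied to the Grover iterate $Q=-A S_0 A^{-1} S_P$, where $S_0=I-2\ket{0}\bra{0}$ and $S_P=I-2P$.

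First I will establish the two-dimensional invariant subspace. Write $a=\sin^2\theta_a$ with $\theta_a\in[0,\pi/2]$. Then $A\ket{0}=\sin\theta_a\ket{\Psi_0}+\cos\theta_a\ket{\Psi_1}$. On $\mathrm{span}\{\ket{\Psi_0},\ket{\Psi_1}\}$ the operator $Q$ acts as a rotation by angle $2\theta_a$. Its eigenvectors are $\ket{\psi_\pm}=\frac{1}{\sqrt2}(\ket{\Psi_0}\pm i\ket{\Psi_1})$, with eigenvalues $e^{\pm 2i\theta_a}$, and $A\ket{0}$ is an equal-weight superposition of $\ket{\psi_+}$ and $\ket{\psi_-}$ up to phases.

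Next I run phase estimation with $M$ controlled powers of $Q$, using $M=\Theta(1/t)$ chosen as a power of two with $M\ge \pi/t$ (constants tuned below). The output is an integer $y$, and I set $\hat a=\sin^2(\pi y/M)$. Each application of $Q$ uses one $A$, one $A^{-1}$, and one reflection about $P$, so the total cost is $O(M)=O(1/t)$ applications of $A$ and $P$. By symmetry, both eigencomponents produce the same $\hat a$, since $\sin^2(\pi y/M)=\sin^2(\pi(M-y)/M)$.

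The main step is the phase estimation guarantee. For either eigenphase $\omega=\pm\theta_a/\pi \bmod 1$, the output satisfies $|y/M-\omega|\le 1/M$ (distance on the circle) with probability at least $8/\pi^2>4/5$. I will prove this by the standard Fej\'er-kernel calculation. Write $\omega M=k+\beta$ with $\beta\in[0,1)$. The probability of outcome $y=k$ is $\frac{\sin^2(\pi\beta)}{M^2\sin^2(\pi\beta/M)}$, and the probability of $y=k+1$ is the same expression with $1-\beta$ in place of $\beta$. Using $\sin x\le x$ in the denominator gives a total of at least $\frac{\sin^2(\pi\beta)}{\pi^2}\bigl(\frac{1}{\beta^2}+\frac{1}{(1-\beta)^2}\bigr)\ge 8/\pi^2$, with the minimum at $\beta=1/2$. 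I expect this to be the hardest step, only because the inequality needs care.

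Finally I convert the angle error into an amplitude error. Set $\tilde\theta=\pi y/M$, so that $|\tilde\theta-\theta_a|\le \pi/M=:\varepsilon$ (after folding the sign). Then
\begin{align*}
|\sin^2\tilde\theta-\sin^2\theta_a| &= |\sin(\tilde\theta+\theta_a)\sin(\tilde\theta-\theta_a)| \\
&\le \bigl(2\sin\theta_a\cos\theta_a+\varepsilon\bigr)\varepsilon \\
&= 2\varepsilon\sqrt{a(1-a)}+\varepsilon^2,
\end{align*}
where the second line uses $|\sin(2\theta_a+\eta)|\le \sin 2\theta_a+|\eta|$. Choosing $M=2\pi/t$ (rounded up to a power of two) gives $\varepsilon\le t/2$, so the error is at most $t\sqrt{a(1-a)}+t^2/4\le t\sqrt{a(1-a)}+t^2$.

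This holds with probability at least $8/\pi^2\ge 4/5$, which proves the lemma with $O(1/t)$ applications of $A$ and $P$.
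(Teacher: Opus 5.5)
Your proposal is correct and is essentially the Brassard--H{\o}yer--Mosca--Tapp argument: phase estimation on the Grover iterate, the $8/\pi^2$ Fej\'er-kernel bound, and the $\sin^2$ error conversion with $t=2\pi/M$. This is exactly what the paper relies on, since it gives no proof of its own and simply cites Theorem 12 of that work; the one step you assert without checking, $\sin^2(\pi\beta)\bigl(\beta^{-2}+(1-\beta)^{-2}\bigr)\ge 8$ on $(0,1)$ with equality at $\beta=1/2$, is true and is the standard content of their Theorem 11.
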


Quantum singular value transformation (QSVT)~\cite{gilyen2019quantum} provides a unified, modular framework for quantum algorithm design. Before QSVT, many fundamental quantum algorithms, such as those for solving linear systems, Hamiltonian simulation, and amplitude amplification, relied on disparate design principles and specialized circuit constructions. QSVT unifies these results through a single mathematical primitive: applying a polynomial transformation to the singular values of a block-encoded matrix. This allows the algorithm designer to work at a higher level of abstraction, focusing on how to model the problem at hand as a functional operation on matrices, rather than on low-level quantum circuit details.

\begin{theorem}[QSVT\cite{gilyen2019quantum}]\label{the: QSVT}
Let $A \in \mathbb{C}^{\tilde{d} \times d}$ have singular value decomposition
$A = \sum_{i=1}^{d_{\min}} \sigma_i \ket{\tilde{\psi}_i}\bra{\psi_i}$ with
$d_{\min} = \min(\tilde{d}, d)$. For any polynomial
$P(x) = \sum_{k=0}^n a_k x^k \in \mathbb{R}[x]$ of degree $n$ such that
$a_k \neq 0$ only if $k \equiv n \pmod{2}$ and $|P(x)| \le 1$ for all
$x \in [-1, 1]$, define the singular value transform
$P^{(SV)}(A) := \sum_{i=1}^{d_{\min}} P(\sigma_i) \ket{\tilde{\psi}_i}\bra{\psi_i}$
for $P$ odd, and
$P^{(SV)}(A) := \sum_{i=1}^{d} P(\sigma_i) \ket{\psi_i}\bra{\psi_i}$
for $P$ even, where $\sigma_i := 0$ for $i > d_{\min}$.

Let $U$ be a unitary and $\Pi, \widetilde{\Pi}$ be orthogonal projectors such that
$\widetilde{\Pi} U \Pi$ is a block-encoding of $A$. Then there exists a vector
$\Phi = (\phi_1, \phi_2, \ldots, \phi_n) \in \mathbb{R}^n$ such that
\begin{align}
P^{(SV)}\rbra{\widetilde{\Pi} U \Pi} =
\begin{cases}
\rbra{\bra{+} \otimes \widetilde{\Pi}} U_P^{(SV)} \rbra{\ket{+} \otimes \Pi}, & \text{if $n$ is odd}, \\[8pt]
\rbra{\bra{+} \otimes \Pi} U_P^{(SV)} \rbra{\ket{+} \otimes \Pi}, & \text{if $n$ is even},
\end{cases}
\end{align}
where $U_P^{(SV)} := \ket{0}\bra{0} \otimes U_{\Phi} + \ket{1}\bra{1} \otimes U_{-\Phi}$ with
\begin{align}
U_{\Phi} :=
\begin{cases}
e^{i \phi_{1} \rbra{2 \tilde{\Pi} - I}} U \displaystyle\prod_{j = 1}^{(n - 1) / 2} \rbra{e^{i \phi_{2 j} \rbra{2 \Pi - I}} U^{\dagger} \cdot e^{i \phi_{2 j + 1} \rbra{2 \tilde{\Pi} - I}} U}, & \text{if $n$ is odd}, \\[10pt]
\displaystyle\prod_{j = 1}^{n / 2} \rbra{e^{i \phi_{2 j - 1} \rbra{2 \Pi - I}} U^{\dagger} \cdot e^{i \phi_{2 j} \rbra{2 \tilde{\Pi} - I}} U}, & \text{if $n$ is even}.
\end{cases}
\end{align}
The circuit uses $O(n)$ calls to $U$ and $U^\dagger$.
\end{theorem}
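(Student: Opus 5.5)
The plan follows the qubitization route of~\cite{gilyen2019quantum}. It reduces the statement to a family of independent two-dimensional problems, one for each singular value, and then invokes quantum signal processing (QSP) on each of them.

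\textbf{Step 1: decomposition into invariant subspaces.} Let $A=\widetilde{\Pi}U\Pi=\sum_i \sigma_i\ket{\tilde\psi_i}\bra{\psi_i}$. By Jordan's lemma applied to the pair of projectors $\Pi$ and $U^\dagger\widetilde{\Pi}U$, the spaces decompose into one- or two-dimensional subspaces. For each $i$ with $0<\sigma_i<1$ we set
\[
\mathcal{H}_i=\mathrm{span}\{\ket{\psi_i},\ket{\psi_i^\perp}\},\qquad
\widetilde{\mathcal{H}}_i=\mathrm{span}\{\ket{\tilde\psi_i},\ket{\tilde\psi_i^\perp}\},
\]
where $\ket{\psi_i^\perp}$ is the normalized component of $U^\dagger\ket{\tilde\psi_i}$ orthogonal to $\mathrm{im}\,\Pi$, and similarly for $\ket{\tilde\psi_i^\perp}$. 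Then $U$ maps $\mathcal{H}_i$ onto $\widetilde{\mathcal{H}}_i$, and in these bases it acts as the reflection-type matrix
\[
R(\sigma_i)=\begin{pmatrix}\sigma_i & \sqrt{1-\sigma_i^2}\\ \sqrt{1-\sigma_i^2} & -\sigma_i\end{pmatrix}.
\]
Moreover $e^{i\phi(2\Pi-I)}$ acts as $e^{i\phi Z}$ on $\mathcal{H}_i$, and $e^{i\phi(2\widetilde{\Pi}-I)}$ acts as $e^{i\phi Z}$ on $\widetilde{\mathcal{H}}_i$. The degenerate cases $\sigma_i\in\{0,1\}$ and the vectors in the kernel (needed for the even case, where $\sigma_i:=0$) give one-dimensional invariant pieces. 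These are handled by the same formulas evaluated at $\sigma=0$ or $1$.

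\textbf{Step 2: quantum signal processing.} On each block, $U_\Phi$ becomes the alternating product $e^{i\phi_1 Z}R(x)e^{i\phi_2 Z}R(x)\cdots$ with $x=\sigma_i$. By induction on $n$, its top-left entry is a polynomial $P_\Phi(x)$ of degree at most $n$ with parity $n \bmod 2$. Conversely, I would invoke the QSP characterization: for any complex $P$ of parity $n\bmod 2$, degree $\le n$, and $|P|\le1$ on $[-1,1]$, admitting a complementary $Q$ with $|P|^2+(1-x^2)|Q|^2=1$, there exist phases $\Phi$ with $P_\Phi=P$. This is proved by peeling off one layer at a time: a leading-coefficient argument fixes $\phi_n$ so that the degree drops by one.

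\textbf{Step 3: real polynomials via the $\ket{+}$ trick.} For a real $P$ with $|P|\le 1$ we cannot in general realize $P$ itself. Instead we find a complex $P_{\mathbb C}$ with $\mathrm{Re}\,P_{\mathbb C}=P$ that is QSP-achievable. This requires choosing the imaginary part $\tilde P$ with $P^2+\tilde P^2\le1$ and the right parity, and then solving for the complementary polynomial from the nonnegative polynomial $1-P^2-\tilde P^2$ via a Fejér--Riesz-type factorization. This existence step is where I expect the main difficulty, and I would follow the root-pairing argument of~\cite{gilyen2019quantum}.

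Replacing $\Phi$ by $-\Phi$ conjugates the polynomial, so
\[
(\bra{+}\otimes\cdot)\bigl(\ket0\bra0\otimes U_\Phi+\ket1\bra1\otimes U_{-\Phi}\bigr)(\ket{+}\otimes\cdot)
\]
yields $\tfrac12(P_{\mathbb C}+\overline{P_{\mathbb C}})=P$ in each block. Summing over the blocks gives $P^{(SV)}$, with left projector $\widetilde{\Pi}$ for odd $n$ and $\Pi$ for even $n$ because of where the alternating product ends. The circuit uses $n$ calls to $U$ or $U^\dagger$, each controlled only through the phase signs, so the cost is $O(n)$.
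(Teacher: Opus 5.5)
The paper does not prove this statement; it quotes it from \cite{gilyen2019quantum}, and your outline follows essentially the proof given there. Three parts of it are correct:
\begin{itemize}
\item Step~1: the two-dimensional blocks on which $U$ acts as $R(\sigma_i)$, with $\sigma\in\{0,1\}$ and the kernel of $A$ as degenerate pieces.
\item The reduction of $U_\Phi$ to $\prod_j e^{i\phi_j Z}R(\sigma_i)$, with the left projector determined by the parity of $n$.
\item The observation that $\Phi\mapsto-\Phi$ conjugates every entry because $R$ is real, so that the $\ket{+}$-average returns $\mathrm{Re}\,P_{\mathbb C}$.
\end{itemize}

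Step~3 fails as you describe it. You propose to choose the imaginary part $\tilde P$ first, subject only to parity and $P^2+\tilde P^2\le 1$ on $[-1,1]$, and then to extract $Q$ from $1-P^2-\tilde P^2$. But a complementary polynomial requires $1-P(x)^2-\tilde P(x)^2=(1-x^2)\,|Q(x)|^2$ for all real $x$. This forces $1-P^2-\tilde P^2$ to vanish at $x=\pm1$ and to be $\le 0$ for $|x|>1$, so nonnegativity on $[-1,1]$ is not enough.

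Concretely, take $n=1$ and $P(x)=x/2$. The choice $\tilde P=0$ meets your conditions, yet $x/2$ is not achievable: every $R$-convention sequence has $|P_\Phi(\pm1)|=1$, because $R(\pm1)=\pm Z$. The only admissible imaginary parts are $\tilde P=\pm\frac{\sqrt3}{2}x$, corresponding to $\phi_1=\pm\pi/3$.

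The fix is to factor $1-P^2$ itself. It is even, of degree $2n$, and nonnegative on $[-1,1]$. The root-pairing lemma of \cite{gilyen2019quantum} therefore gives $1-P^2=B^2+(1-x^2)C^2$. Here $B$ is real with degree $\le n$ and parity $n$, and $C$ is real with degree $\le n-1$ and parity $n-1$. Such a factorization can be built from the product identity $(a^2+(1-x^2)b^2)(c^2+(1-x^2)d^2)=(ac+(1-x^2)bd)^2+(1-x^2)(ad-bc)^2$ applied to the paired root factors.

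Then $P_{\mathbb C}=P+iB$, with complementary polynomial $Q=C$, satisfies your QSP characterization. The imaginary part and $Q$ come out of a single factorization rather than being chosen one after the other. With this correction your argument is complete.
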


Since QSVT implements transformations on singular values via polynomial approximation, one can realize a step function by choosing $P$ to approximate the rectangle function. The following lemma provides such a polynomial.

\begin{lemma}[Polynomial approximations of the rectangle function, Lemma 29 in~\cite{gilyen2019quantum}]\label{lem: Polynomial approximations of the rectangle function}
Let $a,b\in \rbra{0, \frac{1}{2}}$ and $t \in \sbra{-1, 1}$. There exists an even polynomial $F^{a,b,t}\in \mathbb{R}\sbra{x}$ of degree $O\rbra{\frac{\log\frac{1}{b}}{a}}$, such that $\abs{F\rbra{x}} \leq 1$ for all $x \in \sbra{-1, 1}$, and
\begin{align}
\begin{cases}
F\rbra{x} \in \sbra{1-b,1}, & \text{for all } x \in \sbra{-1, -t - a} \cup \sbra{t + a, 1}, \\[8pt]
F\rbra{x} \in \sbra{0,b}, & \text{for all } x \in \sbra{-t + a, t - a}.
\end{cases}
\end{align}
\end{lemma}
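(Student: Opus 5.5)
The plan is to build $F$ from a bounded odd polynomial approximation of the sign function. I would use two shifted copies of it as soft indicators of a half-line, multiply them to get a soft indicator of the interval $[-t,t]$, and subtract the product from $1$. Multiplying, rather than subtracting two sign approximants, is what keeps $F$ inside $[0,1]$ on the whole interval $[-1,1]$ without any further correction.

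\textbf{Step 1 (sign approximation).} First I would obtain an odd polynomial $S$ of degree $O\rbra{\frac{1}{a}\log\frac{1}{b}}$ with the following properties:
\begin{itemize}
\item $\abs{S(y)}\le 1$ for $y\in[-2,2]$;
\item $S(y)\in[1-b,1]$ for $y\in[a,2]$;
\item $S(y)\in[-1,-1+b]$ for $y\in[-2,-a]$.
\end{itemize}
The domain is $[-2,2]$ because the arguments $x\pm t$ below range over $[-2,2]$. Rescaling $y\mapsto y/2$ turns this into a sign approximation on $[-1,1]$ with gap $a/2$, which changes the degree only by a constant factor.

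For the sign approximation on $[-1,1]$ itself, I would follow the standard route. Replace $\mathrm{sgn}(y)$ by $\mathrm{erf}(ky)$ with $k=\Theta\rbra{\frac{1}{a}\sqrt{\log\frac{1}{b}}}$; this choice makes $\abs{\mathrm{erf}(ky)-\mathrm{sgn}(y)}\le b/4$ whenever $\abs{y}\ge a$. Then write $\mathrm{erf}(ky)=\frac{2k}{\sqrt\pi}\int_0^y e^{-k^2s^2}\,ds$ and truncate the Chebyshev (Jacobi--Anger type) expansion of $e^{-k^2s^2}$. A degree of $O\rbra{k\sqrt{\log\frac{1}{b}}}=O\rbra{\frac{1}{a}\log\frac{1}{b}}$ gives uniform error $b/4$. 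Integrating keeps the polynomial odd. Finally, multiply by $1-b/2$ so that the total error stays below $b$ and the approximant still satisfies $\abs{S}\le1$.

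\textbf{Step 2 (interval indicator).} Next define
\[
h(x)=\tfrac12\bigl(1+S(x+t)\bigr),\qquad g(x)=\tfrac12\bigl(1-S(x-t)\bigr),
\]
both of which lie in $[0,1]$ for $x\in[-1,1]$, and set $F(x)=1-h(x)g(x)$. Then $F$ is a polynomial of degree $2\deg S=O\rbra{\frac1a\log\frac1b}$, and $F(x)\in[0,1]$ on $[-1,1]$, so in particular $\abs{F}\le1$ there. Because $S$ is odd, $h(-x)=\tfrac12\bigl(1-S(x-t)\bigr)=g(x)$ and $g(-x)=h(x)$. Hence $hg$ is even, and so is $F$.

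\textbf{Step 3 (case check).} The bounds follow from the sign guarantees in three cases.
\begin{itemize}
\item For $\abs{x}\le t-a$: we have $x+t\ge a$ and $x-t\le-a$, so $h,g\ge1-b/2$. Then $hg\ge1-b$, so $F\in[0,b]$.
\item For $x\ge t+a$: we have $x-t\ge a$, so $g\le b/2$. Then $hg\le b/2$, so $F\ge1-b$.
\item For $x\le -t-a$: we have $x+t\le -a$, so $h\le b/2$, and the same bound $F\ge1-b$ follows.
\end{itemize}
The degenerate cases ($t$ near $0$, empty middle interval, or $t+a>1$) are harmless because the conditions are then vacuous.

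The main obstacle is Step 1: a sign approximation with degree scaling as $\frac1a\log\frac1b$, rather than $\frac1{a^2}$, that is also bounded by $1$ on the whole extended domain. I would control the Chebyshev truncation tail of the Gaussian with standard Bessel-function estimates. Boundedness would then follow from the $1-b/2$ rescaling together with the fact that $\abs{\mathrm{erf}}\le1$.
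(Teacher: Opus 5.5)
The paper gives no proof of this lemma; it imports Lemma~29 of \cite{gilyen2019quantum}. Your argument has the same skeleton as that source: an odd sign approximation stated on $[-2,2]$, so that it can be shifted by $\pm t$, built from the Low--Chuang approximation of $\mathrm{erf}(kx)$, with two shifted copies combined into an even polynomial. You differ in the combination step. Gily\'en et al.\ approximate $\tfrac12\bigl(\mathrm{sgn}(x+t)+\mathrm{sgn}(t-x)\bigr)$, i.e.\ they average the shifted approximants and then rescale into the target ranges. Their lemma is in the orientation $\approx 1$ inside and $\approx 0$ outside, and it guarantees only $\lvert P'\rvert\le 1$ in the transition zones. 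Your product $F=1-hg$ buys two things. First, $0\le F\le 1$ on all of $[-1,1]$, and every $t\in[-1,1]$ (including $t<0$) is handled by one case analysis. Second, it yields directly the complemented orientation that is stated here and used in $\textbf{RST}$. Getting that orientation from the cited statement as $1-P'$ would need $P'\ge 0$ in the transition zones, which the cited statement does not give. Your Steps~2 and~3 are correct as written.

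The one step that needs repair is the error accounting in Step~1. Suppose you truncate the Chebyshev series of $e^{-k^2s^2}$ to uniform error $b/4$ and then integrate. You only get $\bigl\lvert p(y)-\mathrm{erf}(ky)\bigr\rvert\le\frac{2k}{\sqrt\pi}\lvert y\rvert\cdot\frac b4$, not $b/4$. If you instead ask for Gaussian error $O(b/k)$, this route yields only degree $O\bigl(k\sqrt{\log(k/b)}\bigr)=O\bigl(\frac1a\sqrt{\log\frac1b\,(\log\frac1a+\log\frac1b)}\bigr)$. That is not $O\bigl(\frac1a\log\frac1b\bigr)$ when $\log\frac1a\gg\log\frac1b$.

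The fix, which is the Low--Chuang analysis, is to integrate term by term. Start from the Jacobi--Anger expansion $e^{-k^2s^2}=e^{-k^2/2}\bigl(I_0(k^2/2)+2\sum_{j\ge1}(-1)^jI_j(k^2/2)\,T_{2j}(s)\bigr)$. The integral $\int_0^yT_{2j}(s)\,ds=\frac12\bigl(\frac{T_{2j+1}(y)}{2j+1}-\frac{T_{2j-1}(y)}{2j-1}\bigr)$ has sup norm at most $\frac1{2j-1}$. So the error of the erf series truncated after index $J$ is at most $\frac{4k}{\sqrt\pi\,(2J+1)}\sum_{j>J}e^{-k^2/2}I_j(k^2/2)$. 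Since $J=\Theta\bigl(k\sqrt{\log\frac1b}\bigr)=\Omega(k)$, the prefactor is $O(1)$, and degree $O\bigl(k\sqrt{\log\frac1b}\bigr)$ does give erf error $b/4$. Alternatively, invoke the sign-function lemma of \cite{gilyen2019quantum} directly. It is stated on $[-2,2]$ with $\lvert S\rvert\le 1$ and degree $O\bigl(\frac1a\log\frac1b\bigr)$, which is exactly what your Step~2 consumes.
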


\section{Upper Bound}

A standard approach to boost the success probability is to repeat a subroutine several
times and take the median. However, this requires measuring the output of each run,
which collapses the superposition over distributions and prevents the result from being
used as a quantum subroutine inside amplitude estimation. To address this, we design a
coherent procedure that transforms each distribution without any measurement:
positive distributions are mapped to states with bias $\ge 1-\epsilon$, and negative
distributions to states with bias $\le \epsilon$, while preserving the superposition
over the index register.

\begin{algorithm}[hptb]
\caption{Robust single distribution transform $\textbf{RST}\rbra{\Delta,\epsilon}$}
\label{alg: Robust single coin transform}
\begin{algorithmic}[1]
    \STATE {\bfseries Input:} Quantum query access $O$, index $i$, gap parameter $\Delta$, error parameter $\epsilon$.
    \STATE {\bfseries Output:} a unitary that, when applied to $\ket{+}\ket{i}_A\ket{0}_{B,C}\ket{1}_D$, produces a state whose amplitude on $\ket{+}\ket{i}_A\ket{0}_{B,C}\ket{1}_D$ encodes the bias of $D_i$, amplified to $\ge 1-\epsilon$ if $D_i$ is positive and $\le \epsilon$ if $D_i$ is negative.
    \STATE Let $\tilde{\Pi}$ be $\ket{i}\bra{i}_A\otimes\rbra{\ket{0}\bra{0}_B\otimes I_C\otimes\ket{0}\bra{0}_D+\ket{1}\bra{1}_B\otimes I_C\otimes\ket{1}\bra{1}_D}$, $\Pi$ be $\ket{i}\bra{i}_A\otimes\ket{0}\bra{0}_B\otimes\ket{0}\bra{0}_C\otimes I_D$, and $U$ be $O\otimes I_D$.
    \STATE Let $F$ be the even polynomial in Lemma~\ref{lem: Polynomial approximations of the rectangle function} with parameters $a=\frac{\Delta}{2}$, $b=\frac{\epsilon}{2}$, and $t=\frac{1}{\sqrt{2}}$.
    \STATE {\bfseries Return} the unitary $U^{(SV)}_F$.
\end{algorithmic}
\end{algorithm}

\begin{lemma}[Robust single distribution transform]\label{lem: robust single coin transform}
Let $P=\ket{+}\bra{+}\otimes\ket{i}\bra{i}_A
\otimes\ket{0}\bra{0}_{B,C}\otimes\ket{1}\bra{1}_D$. Algorithm~\ref{alg: Robust single coin transform} outputs a unitary $\textbf{RST}\rbra{\Delta,\epsilon}$ such that
\begin{align}
\begin{split}
    &\Abs{P\,\textbf{RST}\rbra{\Delta,\epsilon}\rbra{\ket{+}\ket{i}_A\ket{0}_{B,C}\ket{1}_D}}^2 \ge 1-\epsilon, \quad\text{if } D_i \text{ is positive},\\[4pt]
    &\Abs{P\,\textbf{RST}\rbra{\Delta,\epsilon}\rbra{\ket{+}\ket{i}_A\ket{0}_{B,C}\ket{1}_D}}^2 \le \epsilon, \quad\text{if } D_i \text{ is negative},
\end{split}
\end{align}
using $O\rbra{\frac{1}{\Delta}\log\frac{1}{\epsilon}}$ queries to $O$ and $O^\dagger$.
\end{lemma}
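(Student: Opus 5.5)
The plan is to compute the block-encoded operator $A=\tilde{\Pi}U\Pi$ explicitly. We then observe that $\ket{i}_A\ket{0}_{B,C}\ket{1}_D$ is a right singular vector of $A$ with singular value $\sqrt{p_i}$. Finally, we invoke Theorem~\ref{the: QSVT} with the even polynomial $F$ of Lemma~\ref{lem: Polynomial approximations of the rectangle function}, so the relevant amplitude becomes exactly $F(\sqrt{p_i})$.

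\textbf{Step 1: singular value decomposition of the block encoding.} The image of $\Pi$ is spanned by $\ket{i}_A\ket{0}_{B,C}\ket{d}_D$ for $d\in\{0,1\}$. Applying $U=O\otimes I_D$ gives
\[
\ket{i}_A\bigl(\sqrt{p_i}\ket{1}_B\ket{\varphi_{i,1}}_C+\sqrt{1-p_i}\ket{0}_B\ket{\varphi_{i,0}}_C\bigr)\ket{d}_D .
\]
The projector $\tilde{\Pi}$ keeps only the terms whose $B$ and $D$ registers agree. For $d=0$ this leaves $\sqrt{1-p_i}\ket{i}\ket{0}\ket{\varphi_{i,0}}\ket{0}$, and for $d=1$ it leaves $\sqrt{p_i}\ket{i}\ket{1}\ket{\varphi_{i,1}}\ket{1}$. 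These two output vectors are orthogonal, since they differ in $B$ and $D$. Hence
\[
A=\sqrt{1-p_i}\,\ket{i,1,\varphi_{i,0},0}\bra{i,0,0,0}\;+\;\sqrt{p_i}\,\ket{i,1,\varphi_{i,1},1}\bra{i,0,0,1}
\]
(written loosely in $B,C,D$ order) is already a singular value decomposition. Its right singular vectors are $\ket{i}\ket{0}_{B,C}\ket{0}_D$ and $\ket{i}\ket{0}_{B,C}\ket{1}_D$, with singular values $\sqrt{1-p_i}$ and $\sqrt{p_i}$ respectively.

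\textbf{Step 2: apply even QSVT.} Since $F$ is even with $|F|\le 1$ on $[-1,1]$, Theorem~\ref{the: QSVT} gives
\[
(\bra{+}\otimes\Pi)\,U_F^{(SV)}\,(\ket{+}\otimes\Pi)=F^{(SV)}(A)=\sum_{j}F(\sigma_j)\ket{\psi_j}\bra{\psi_j}.
\]
The input $\ket{+}\ket{i}_A\ket{0}_{B,C}\ket{1}_D$ lies in the image of $\ket{+}\otimes\Pi$, and $P$ is the rank-one projector onto it. It follows that
\[
\Abs{P\,\textbf{RST}(\Delta,\epsilon)\ket{+}\ket{i}\ket{0}\ket{1}}=\abs{F(\sqrt{p_i})}.
\]
The query count is $O(\deg F)=O\bigl(\frac{1}{\Delta}\log\frac{1}{\epsilon}\bigr)$, because $a=\Delta/2$ and $b=\epsilon/2$. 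The parameters are admissible: $a,b\in(0,\tfrac12)$ because $\Delta\le\tfrac12$ and $\epsilon<1$.

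\textbf{Step 3: threshold check.} This is the only step with real content, and it is still elementary. With $t=1/\sqrt2$, we must show that $\sqrt{p_i}\ge t+\Delta/2$ when $D_i$ is positive, and $\sqrt{p_i}\le t-\Delta/2$ when $D_i$ is negative.

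For a positive $D_i$, monotonicity of the square root gives $\sqrt{p_i}\ge\sqrt{1/2+\Delta}$. Then
\[
\sqrt{1/2+\Delta}-\tfrac{1}{\sqrt2}=\frac{\Delta}{\sqrt{1/2+\Delta}+1/\sqrt2}\ge\frac{\Delta}{1+1/\sqrt2}\ge\frac{\Delta}{2}.
\]
For a negative $D_i$,
\[
\tfrac{1}{\sqrt2}-\sqrt{1/2-\Delta}=\frac{\Delta}{1/\sqrt2+\sqrt{1/2-\Delta}}\ge\frac{\Delta}{\sqrt2}\ge\frac{\Delta}{2}.
\]
Note that $t+a\le 1$, and $t-a>0$, so both intervals are nonempty. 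By Lemma~\ref{lem: Polynomial approximations of the rectangle function}:
\begin{itemize}
\item For positive $D_i$, we get $F(\sqrt{p_i})\in[1-\epsilon/2,1]$, so the squared amplitude is at least $(1-\epsilon/2)^2\ge 1-\epsilon$.
\item For negative $D_i$, we get $F(\sqrt{p_i})\in[0,\epsilon/2]$, so the squared amplitude is at most $\epsilon^2/4\le\epsilon$.
\end{itemize}

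The main point to get right is Step 1. One must confirm that the chosen $\tilde{\Pi}$, which pairs $B$ with $D$, makes the two images orthogonal, so that $\ket{i,0,0,1}$ is genuinely a singular vector with value $\sqrt{p_i}$ rather than mixing with the $\sqrt{1-p_i}$ branch. Once this holds, the even-polynomial form of QSVT acts diagonally on it and the rest is the arithmetic above.
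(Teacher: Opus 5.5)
Your proposal is correct and follows the paper's proof essentially step for step. Both use the same explicit SVD of $\tilde{\Pi}(O\otimes I_D)\Pi$ with singular values $\sqrt{p_i}$ and $\sqrt{1-p_i}$. Both then apply even QSVT with the rectangle polynomial at $a=\Delta/2$, $b=\epsilon/2$, $t=1/\sqrt{2}$, and finish with the same threshold argument.

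You add two useful details. You explicitly check that $\sqrt{1/2+\Delta}\ge 1/\sqrt{2}+\Delta/2$ and $\sqrt{1/2-\Delta}\le 1/\sqrt{2}-\Delta/2$, which the paper only asserts. You also correctly note that $P$ selects the $\ket{1}_D$ component, where the paper's text says $\ket{0}_D$.

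Fix one typo of your own: in the first term of $A$ the output vector should be $\ket{i,0,\varphi_{i,0},0}$, not $\ket{i,1,\varphi_{i,0},0}$. Your preceding sentence already states it correctly.
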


\begin{proof}
We prove the lemma in three steps: (i) identify the relevant singular
values, (ii) apply QSVT to amplify the gap via a polynomial approximation
to the rectangle function, and (iii) bound the resulting amplitudes and
translate the polynomial degree into query complexity.

\noindent\textit{Step 1: Singular value decomposition.}
We construct an operator whose singular values encode $p_i$ and $1-p_i$:
\begin{align}
\begin{split}
    \tilde{\Pi}(O\otimes I_D)\Pi
    &= \tilde{\Pi}(O\otimes I_D)\ket{i}\bra{i}_A\otimes\ket{0}\bra{0}_B
       \otimes\ket{0}\bra{0}_C\otimes I_D\\[2pt]
    &= \tilde{\Pi}\ket{i}\bra{i}_A
       \Bigl(\sqrt{p_i}\ket{1}_B\ket{\varphi_{i,1}}_C
       +\sqrt{1-p_i}\ket{0}_B\ket{\varphi_{i,0}}_C\Bigr)\bra{0}_C\otimes I_D\\[2pt]
    &= \sqrt{p_i}\ket{i}\bra{i}_A\otimes\ket{1}\bra{0}_B
       \otimes\ket{\varphi_{i,1}}\bra{0}_C\otimes\ket{1}\bra{1}_D \\
    &\quad + \sqrt{1-p_i}\ket{i}\bra{i}_A\otimes\ket{0}\bra{0}_B
       \otimes\ket{\varphi_{i,0}}\bra{0}_C\otimes\ket{0}\bra{0}_D.
\end{split}
\end{align}
This exhibits the singular value decomposition with singular values
$\sqrt{p_i}$ and $\sqrt{1-p_i}$.

\noindent\textit{Step 2: QSVT with the polynomial approximation to the rectangle function.}
We now apply Theorem~\ref{the: QSVT} with the even polynomial $F$ from
Lemma~\ref{lem: Polynomial approximations of the rectangle function},
whose parameters are set to $a=\frac{\Delta}{2}$, $b=\frac{\epsilon}{2}$,
and $t=\frac{1}{\sqrt{2}}$. Since $F$ is even, the QSVT circuit
$U^{(SV)}_F$ acts on the singular values, giving:
\begin{align}
    \rbra{\bra{+}\otimes \Pi} U^{(SV)}_F\ket{+}\ket{i}_A\ket{0}_{B,C}\ket{1}_D
    = F(\sqrt{p_i})\ket{i}_A\ket{0}_{B,C}\ket{1}_D.
\end{align}

\noindent\textit{Step 3: Amplitude analysis and query complexity.}
The projector $P=\ket{+}\bra{+}\otimes\ket{i}\bra{i}_A
\otimes\ket{0}\bra{0}_{B,C}\otimes\ket{1}\bra{1}_D$ selects only the $\ket{0}_D$ component. Consequently,
\begin{align}
    \Abs{P\,\textbf{RST}(\Delta,\epsilon)\ket{+}\ket{i}_A\ket{0}_{B,C}\ket{1}_D}^2
    = F^2(\sqrt{p_i}).
\end{align}
Now we distinguish the two cases using the guarantees of $F$:
\begin{itemize}
    \item \textbf{Positive distributions} ($p_i \ge \frac{1}{2}+\Delta$):
    Then $\sqrt{p_i} \ge \frac{1}{\sqrt{2}}+\frac{\Delta}{2}$, which lies
    in the region where Lemma~\ref{lem: Polynomial approximations of the
    rectangle function} guarantees $F(\sqrt{p_i}) \ge 1-\frac{\epsilon}{2}$.
    Hence $F^2(\sqrt{p_i}) \ge 1-\epsilon$.

    \item \textbf{Negative distributions} ($p_i \le \frac{1}{2}-\Delta$):
    Then $\sqrt{p_i} \le \frac{1}{\sqrt{2}}-\frac{\Delta}{2}$, which lies
    in the region where $F(\sqrt{p_i}) \le \frac{\epsilon}{2}$.
    Hence $F^2(\sqrt{p_i}) \le \epsilon$.
\end{itemize}
Finally, by Lemma~\ref{lem: Polynomial approximations of the rectangle
function}, the polynomial $F$ has degree
$O(\frac{1}{\Delta}\log\frac{1}{\epsilon})$, and by
Theorem~\ref{the: QSVT}, this degree translates directly into the query
complexity of $\textbf{RST}\rbra{\Delta,\epsilon}$.
\end{proof}

Next, we combine the robust single distribution transform with quantum amplitude
estimation in an adaptive two-stage scheme: a coarse estimate is first obtained to decide whether to terminate early, and if not, its value calibrates the precision of a second, finer estimation run.

\begin{algorithm}[hptb]
\caption{Two-stage fraction estimator $\textbf{FracEst}\rbra{\Delta, \epsilon, \delta}$}
\label{alg: Fraction estimator}
\begin{algorithmic}[1]
    \STATE {\bfseries Input:} Quantum query access $O$, parameters $\Delta, \epsilon, \delta \in (0,1)$.
    \STATE {\bfseries Output:} An estimate $\hat{\rho}$ of $\rho$.
    \STATE Let $M$ be the number of distributions. 
    \STATE Let $P=\ket{+}\bra{+}\otimes I_A
\otimes\ket{0}\bra{0}_{B,C}\otimes\ket{1}\bra{1}_D$.
    \STATE Prepare the initial state
    $\ket{\Psi_0} \gets \ket{+} \otimes H^{\otimes \lceil \log M\rceil}\ket{0}_A \otimes \ket{0}_{B,C,D}$.
    \STATE \textbf{Stage 1:} coarse estimate.
    \STATE Apply $\textbf{RST}\rbra{\Delta, \epsilon/48}$ on registers $A,B,C,D$ to obtain $\ket{\Psi_1} \gets \textbf{RST}\rbra{\Delta, \epsilon/48}\ket{\Psi_0}$.
    \STATE Let $\hat{\rho}_1 \gets \textbf{QAE}\rbra{\ket{\Psi_1},\, P,\, \sqrt{\epsilon/48}}$.
    \IF{$\hat{\rho}_1 \leq 3\epsilon/4$}
        \STATE Return $\hat{\rho} \gets 0$.
    \ELSE
        \STATE \textbf{Stage 2:} fine estimate using $\hat{\rho}_1$ to tune precision.
        \STATE Prepare $\ket{\Psi_0}$ again.
        \STATE Apply $\textbf{RST}\rbra{\Delta, \epsilon^2/16}$ to obtain $\ket{\Psi_2} \gets \textbf{RST}\rbra{\Delta, \epsilon^2/16}\ket{\Psi_0}$.
        \STATE Let $\hat{\rho} \gets \textbf{QAE}\rbra{\ket{\Psi_2},\, P,\, \epsilon/(16\sqrt{\hat{\rho}_1})}$.
        \STATE Return $\hat{\rho}$.
    \ENDIF
    \STATE Repeat the above $\Theta\rbra{\log(1/\delta)}$ times and output the median of the returned values.
\end{algorithmic}
\end{algorithm}

\begin{theorem}\label{the: Adaptive fraction estimator}
For $0<\Delta<\frac{1}{2}$ and $\epsilon,\delta\in(0,1)$, given quantum query access, Algorithm~\ref{alg: Fraction estimator} outputs $\hat{\rho}$ such that
\begin{align}
    \Pr\sbra{|\hat{\rho}-\rho|\leq \epsilon}\geq 1-\delta,
\end{align}
using $\tilde{O}\rbra{\rbra{\frac{\sqrt{\rho}}{\Delta\epsilon}+\frac{1}{\Delta\sqrt{\epsilon}}}\log\frac{1}{\delta}}$ queries to $O$ and $O^\dagger$.
\end{theorem}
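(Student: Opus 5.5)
The proof analyses a single run of the two-stage procedure, shows it succeeds with probability bounded below by a constant larger than $1/2$, and then applies the median trick.

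\emph{Amplitude after RST.} Apply $\textbf{RST}(\Delta,\eta)$ coherently on the uniform superposition $\frac{1}{\sqrt{M}}\sum_i\ket{i}$. This works because the construction in Lemma~\ref{lem: robust single coin transform} is block-diagonal in the index register: $\Pi$ and $\tilde\Pi$ commute with $\ket{i}\bra{i}_A$, so taking $\ket{i}\bra{i}_A\to I_A$ yields a direct sum of the per-index QSVT circuits. Consequently the marked amplitude is
\[
a_\eta=\frac{1}{M}\sum_i F^2(\sqrt{p_i}),
\]
and Lemma~\ref{lem: robust single coin transform} gives $\rho(1-\eta)\le a_\eta\le \rho+\eta$. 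Hence $|a_\eta-\rho|\le\eta$. (I would also handle the padding indices when $M$ is not a power of two, either by assuming $M=2^k$ or by rescaling by $2^{\lceil\log M\rceil}/M$.)

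\emph{Stage 1.} Take $\eta=\epsilon/48$ and $t=\sqrt{\epsilon/48}$. By Lemma~\ref{lem: amplitude estimation}, with probability at least $4/5$,
\[
|\hat\rho_1-a|\le t\sqrt{a}+t^2,
\]
so $|\hat\rho_1-\rho|\le\sqrt{\epsilon\rho/48}+\epsilon/24$. This stage costs $O(1/\sqrt{\epsilon})$ QAE rounds, each using $O(\frac1\Delta\log\frac1\epsilon)$ queries, for $\tilde O(\frac{1}{\Delta\sqrt\epsilon})$ in total. Two consequences follow.
\begin{itemize}
\item If $\hat\rho_1\le 3\epsilon/4$, then $\rho\le\epsilon$. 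Suppose instead $\rho>\epsilon$; then $\sqrt{\epsilon\rho/48}\le\rho/\sqrt{48}$, so $\hat\rho_1\ge\rho(1-1/\sqrt{48})-\epsilon/24>3\epsilon/4$, a contradiction. Thus outputting $0$ is $\epsilon$-accurate.
\item If $\hat\rho_1>3\epsilon/4$, then $\hat\rho_1=\Theta(\rho)$: we get $\rho=\Omega(\epsilon)$, and from there $\hat\rho_1\le 2\rho+O(\epsilon)=O(\rho)$ and $\hat\rho_1\ge\rho/2$.
\end{itemize}
Proving these two inequalities with the specific constants $48$ and $3/4$ is a routine case check.

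\emph{Stage 2.} Take $\eta=\epsilon^2/16$ and $t=\epsilon/(16\sqrt{\hat\rho_1})$. With probability at least $4/5$,
\[
|\hat\rho-a|\le t\sqrt a+t^2,
\]
and $\sqrt a\le\sqrt{\rho+\epsilon^2/16}=O(\sqrt{\hat\rho_1})$. This gives $t\sqrt a\le c\,\epsilon$ with a small constant $c$, and $t^2\le\epsilon^2/(256\hat\rho_1)\le\epsilon/192$. Adding $|a-\rho|\le\epsilon^2/16$ yields $|\hat\rho-\rho|\le\epsilon$ once the constants are tracked. The cost is
\[
O\!\left(\frac{\sqrt{\hat\rho_1}}{\epsilon}\right)\cdot O\!\left(\frac{1}{\Delta}\log\frac{1}{\epsilon}\right)=\tilde O\!\left(\frac{\sqrt\rho}{\Delta\epsilon}\right),
\]
using $\hat\rho_1=O(\rho)$ on the good event of Stage 1. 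The stated complexity is a bound that holds on that event; to make the worst case bounded, I would cap $t$ from below at $\epsilon/16$, which is harmless.

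\emph{Main obstacle and boosting.} The main difficulty is the constant bookkeeping. It must simultaneously show that (i) the early-termination threshold never discards an instance with $\rho>\epsilon$, and (ii) in Stage 2 the error $t\sqrt a$ stays below $\epsilon/2$ even though $t$ is tuned to $\hat\rho_1$ rather than to $\rho$. Both reduce to the two-sided comparison $\rho/2\le\hat\rho_1\le 2\rho+O(\epsilon)$ on the Stage-1 good event. One run therefore succeeds with probability at least $(4/5)^2>1/2$, and the median of $\Theta(\log\frac1\delta)$ independent runs is within $\epsilon$ with probability at least $1-\delta$ by a Chernoff bound. This multiplies the per-run cost by $\log\frac1\delta$, giving the bound stated in Theorem~\ref{the: Adaptive fraction estimator}.
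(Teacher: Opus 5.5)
Your proposal is correct and follows the paper's proof essentially step for step. It uses the same Stage-1 parameters $\eta=\epsilon/48$, $t=\sqrt{\epsilon/48}$, the same threshold case analysis at $3\epsilon/4$ (giving either $\rho\le\epsilon$ or $\hat\rho_1=\Theta(\rho)$), the same Stage-2 choice $t=\epsilon/(16\sqrt{\hat\rho_1})$ with $\eta=\epsilon^2/16$, per-run success $16/25$, and the median trick.

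There is one small arithmetic slip. The Stage-1 additive term should be $\epsilon/16$ rather than $\epsilon/24$, since it collects $t\sqrt{\eta}$, $t^2$ and $|a_\eta-\rho|$, each equal to $\epsilon/48$. This is harmless: $\rho(1-1/\sqrt{48})-\epsilon/16>3\epsilon/4$ still holds for $\rho>\epsilon$. Your two-sided bound $|a_\eta-\rho|\le\eta$ is exactly what the triangle inequality needs, and it is more careful than the paper's one-sided estimate.
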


\begin{proof}
Let $\ket{\Psi_0}$ and $P=\ket{+}\bra{+}\otimes I_A
\otimes\ket{0}\bra{0}_{B,C}\otimes\ket{1}\bra{1}_D$ be as in Algorithm~\ref{alg: Fraction estimator}. By Lemma~\ref{lem: robust single coin transform}, for any error parameter $p$, each positive distribution contributes amplitude squared $\ge 1-p$ on the $P$-subspace and each negative distribution contributes $\le p$. Averaging over the uniform superposition on the index register yields
\begin{align}
    \Abs{P\,\textbf{RST}(\Delta,p)\ket{\Psi_0}}^2 \leq \rho + (1-2\rho)p. \label{eq:taup}
\end{align}

The algorithm first calls $\textbf{QAE}$ with $1/t = \sqrt{\epsilon/48}$. Substituting $p = \epsilon/48$ into~\eqref{eq:taup} and applying Lemma~\ref{lem: amplitude estimation} yields
\begin{align}
\begin{aligned}
    |\hat{\rho}_1-\rho|
    &\leq \sqrt{\frac{\epsilon}{48}}\sqrt{\rho+(1-2\rho)\frac{\epsilon}{48}}+\frac{\epsilon}{48}+(1-2\rho)\frac{\epsilon}{48}\\
    &< \frac{\sqrt{\rho}}{4\sqrt{3}}\sqrt{\epsilon}+\frac{\epsilon}{16}.
\end{aligned}
\end{align}

If $\hat{\rho}_1 \leq 3\epsilon/4$, then $\rho - \frac{\sqrt{\rho}}{4\sqrt{3}}\sqrt{\epsilon} - \frac{\epsilon}{16} - \frac{3\epsilon}{4} \leq 0$. Viewed as a function of $\rho$, the left-hand side is negative at $\rho=0$, positive at $\rho=\epsilon$, and has a single minimum at $\rho=\epsilon/192$. Hence $\rho \leq \epsilon$, and outputting $0$ gives an $\epsilon$-approximation.

If $\hat{\rho}_1 > 3\epsilon/4$, then $\rho + \frac{\sqrt{\rho}}{4\sqrt{3}}\sqrt{\epsilon} + \frac{\epsilon}{16} - \frac{3\epsilon}{4} \geq 0$. Since the left-hand side evaluated at $\rho=\epsilon/3$ is negative, we have $\rho \geq \epsilon/3$. From the error bound on $\hat{\rho}_1$ this gives $|\hat{\rho}_1-\rho| \leq \rho/2$, and thus $0.5\rho \leq \hat{\rho}_1 \leq 1.5\rho$.

The algorithm then calls $\textbf{QAE}$ a second time with $1/t = \epsilon/(16\sqrt{\hat{\rho}_1})$. Since $\hat{\rho}_1 = \Theta(\rho)$, applying Lemma~\ref{lem: amplitude estimation} with~\eqref{eq:taup} gives
\begin{align}
\begin{aligned}
    |\hat{\rho}-\rho|
    &\leq \frac{\epsilon}{16\sqrt{\hat{\rho}_1}}\sqrt{\rho+(1-2\rho)\frac{\epsilon^2}{16}}+\frac{\epsilon^2}{256\hat{\rho}_1}+(1-2\rho)\frac{\epsilon^2}{16}\\
    &\leq \frac{\epsilon}{4}.
\end{aligned}
\end{align}

Each run succeeds with probability at least $16/25$. Repeating $\Theta(\log(1/\delta))$ times and taking the median boosts the success probability to $1-\delta$. Summing the query costs of the two stages and accounting for repetition gives the total complexity $\tilde{O}\big(\big(\frac{\sqrt{\rho}}{\Delta\epsilon}+\frac{1}{\Delta\sqrt{\epsilon}}\big)\log\frac{1}{\delta}\big)$.
\end{proof}

\section{Lower bound}

We prove the lower bound by reducing the estimation problem to a composed decision
problem $\textbf{Count}\circ\textbf{DD}$. The outer function $\textbf{Count}$ is a
standard Boolean function that asks whether an $m$-bit string has Hamming weight
$\rho m$ or $(\rho+\epsilon)m$. The inner function $\textbf{DD}$ asks whether a
given distribution, accessed via $R$ independent samples, is $P$ or $Q$. Since the
input to each inner instance consists of random draws rather than a deterministic
string, $\textbf{DD}$ is an average-case problem. The composition thus has a
Boolean function on the outside and an average-case function on the inside---a
direction not covered by standard composition theorems. We construct adversary
matrices for both layers and show that they factor cleanly, yielding a product
lower bound. The Hellinger distance $d_H(P,Q)$ arises from the inner adversary analysis,
and instantiating $P,Q$ as Bernoulli distributions with bias gap $2\Delta$
yields the final $\Omega\big(\frac{\sqrt{\rho}}{\Delta\epsilon}\big)$ bound.

\subsection{Problem Formalization}

For the lower bound we consider a weaker access model: instead of the coherent quantum oracle
of Eq~\ref{eq: input oracle}, the algorithm is given only a finite number of classical samples
from each distribution. Since any lower bound proved against this weaker model immediately
implies the same bound for the original, stronger model, it suffices to work in this
sample-based setting. We begin by defining the input model.

\begin{definition}\label{def: less general input model}
Let $\mathcal{D} = (D_1,\ldots,D_m)$ be a collection of distributions over $[n]$. Suppose that
for each $i \in [m]$, we draw $R$ independent samples $x_{i,1},\ldots,x_{i,R} \sim D_i$, and
let $x = \{x_{i,r}\}_{i\in[m],\,r\in[R]}$ denote the collection of all samples. A sample-based
input oracle for $x$ is a unitary $O_B$ (together with its inverse and controlled versions)
satisfying
\begin{align}
    O_B\ket{i}\ket{r}\ket{0} = \ket{i}\ket{r}\ket{x_{i,r}},
\end{align}
for all $i \in [m]$ and $r \in [R]$.
\end{definition}

We next formalize the counting decision problem that constitutes the outer layer of our reduction.

\begin{definition}[Counting decision problem]\label{def: Counting decision problem}
Let $m$ be a positive integer and $\rho,\epsilon \in (0,1)$. The counting decision problem
($\textbf{Count}$) asks: given an $m$-bit string, decide whether its Hamming weight is
$\rho m$ or $(\rho+\epsilon)m$. $\textbf{Count}$ corresponds to the partial Boolean function
$\mathcal{F}_{m,\rho,\epsilon} \colon \cbra{0,1}^m \to \cbra{0,1}$ defined by
\begin{align}
    \mathcal{F}_{m,\rho,\epsilon}(x) =
    \begin{cases}
        0, & \text{if } |x|_1 = \rho m,\\[4pt]
        1, & \text{if } |x|_1 = (\rho+\epsilon)m.
    \end{cases}
\end{align}
\end{definition}

The inner layer of our reduction is the problem of distinguishing two distributions $P$
and $Q$, given $R$ independent samples from the unknown distribution.

\begin{definition}[Distinguishing two probability distributions]\label{def: Distinguishing two probability distributions}
Given an unknown distribution $D$ that is either $P$ or $Q$, the problem of distinguishing
two probability distributions ($\textbf{DD}$) is to decide whether $D$ is $P$ or $Q$. Under
the input model of Definition~\ref{def: less general input model}, $\textbf{DD}$ corresponds
to the function $\mathcal{G}: \cbra{x_1,...,x_R}\to \cbra{0,1}$ such that
\begin{align}
    \mathcal{G} =
    \begin{cases}
        0, & \text{if } \cbra{x_1,...,x_R} \text{ are sampled from }P,\\[4pt]
        1, & \text{if } \cbra{x_1,...,x_R} \text{ are sampled from }Q.
    \end{cases}
\end{align}
\end{definition}

With both layers defined, we now formalize the composed problem that captures the full
difficulty of our lower bound. The composition $\textbf{Count}\circ\textbf{DD}$ asks:
given $m$ unknown distributions, each either $P$ or $Q$, decide whether the fraction of
$P$ in this collection is $\rho$ or $\rho+\epsilon$.

\begin{definition}[Composition of counting with distinguishing]
    Given a collection of $m$ unknown distributions $\mathcal{D} = (D_1,\ldots,D_m)$ where for each $i\in\sbra{m}$, $D_i$ is either $P$ or $Q$, the problem is to decide whether the fraction of $P$ in this collection is $\rho$ or $\rho+\epsilon$. The composition of $\textbf{Count}$ and $\textbf{DD}$ is denoted $\textbf{Count}\circ\textbf{DD}$, which corresponds to the function $\mathcal{H}=\mathcal{F}_{m,\rho,\epsilon}\circ\rbra{\mathcal{G}_1,\cdots,\mathcal{G}_m}$ with the input $x$ such that
    \begin{align}
    \begin{split}
        &x=\cbra{x_{1,1},x_{1,2},\dots,x_{1,R},x_{2,1},x_{2,2},\dots,x_{2,R},\dots,x_{m,1},x_{m,2},\dots,x_{m,R}}, \\
        &x_i=\cbra{x_{i,1}, x_{i,2},\dots, x_{i,R}},
    \end{split}
    \end{align}
    where $\mathcal{F}$ and $\mathcal{G}_i$ are defined in
    Definition~\ref{def: Counting decision problem} and
    Definition~\ref{def: Distinguishing two probability distributions}, respectively. We evaluate
    $\mathcal{H}$ on input $x$ by first computing $\mathcal{G}_i\rbra{x_i}=\tilde{x}_i$ for each
    $i$, and then evaluating $\mathcal{F}$ on the resulting bit string
    $\tilde{x}=\cbra{\tilde{x}_1,\tilde{x}_2,...,\tilde{x}_m}$. $\mathcal{H}$ outputs $0$ if
    $\abs{\tilde{x}}=\rho m$, and $1$ if $\abs{\tilde{x}}=(\rho+\epsilon)m$.
\end{definition}

\subsection{Two Quantum Adversary Lower Bound Methods}

We first recall the adversary method for Boolean functions, which will be used to analyze the outer layer ($\textbf{Count}$) of our composition.

\begin{lemma}[Adversary bound for Boolean functions]\label{lem: Adversary bound for Boolean functions}
Let $f \colon S \to \cbra{0,1}$ be a Boolean function, where $S \subseteq \cbra{0,1}^n$. Let $\Gamma$ be a Hermitian matrix with rows and columns labeled by elements of $S$. We say that $\Gamma$ is an adversary matrix for $f$ if $\Gamma\sbra{x,y} = 0$ whenever $f(x) = f(y)$. $\Delta_i$ is a zero-one matrix where $\Delta_i[x,y]=1$ if $x_i\neq y_i$ and $\Delta_i[x,y]=0$ otherwise. Define the adversary bound as
\begin{align}
    \mathrm{ADV}(f) = \max_{\substack{\Gamma \neq 0}} \frac{\Abs{\Gamma}}{\max_i \Abs{\Gamma \circ \Delta_i}}.
\end{align}
Let $Q_\epsilon(f)$ be the two-sided $\epsilon$-bounded error quantum query complexity of $f$. Then for any Boolean function $f$,
\begin{align}
    Q_\epsilon(f) \ge \frac{1-2\sqrt{\epsilon(1-\epsilon)}}{2} \mathrm{ADV}(f).
\end{align}
\end{lemma}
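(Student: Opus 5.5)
The statement is the standard (positive-weight and negative-weight) adversary bound of Høyer, Lee and Špalek. I would prove it by the classical progress-measure argument, in three steps: encode the algorithm's states as a Gram matrix, show the initial progress is large and each query changes it little, and show the final progress must be small when the algorithm is correct.

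\textbf{Setup.} Fix a $T$-query algorithm that computes $f$ with error at most $\epsilon$. Let $|\psi_x^t\rangle$ be its state after $t$ queries on input $x$. Choose a unit principal eigenvector $\delta$ of $\Gamma$ with $|\delta^\dagger\Gamma\delta|=\Abs{\Gamma}$; replace $\Gamma$ by $-\Gamma$ if necessary so that $\delta^\dagger\Gamma\delta=\Abs{\Gamma}$. Define the progress measure
\[
W^t=\sum_{x,y}\Gamma[x,y]\,\overline{\delta_x}\delta_y\,\langle\psi_x^t|\psi_y^t\rangle ,
\]
that is, $W^t=\langle \Gamma\circ M^t\rangle_\delta$, where $M^t$ is the Gram matrix of the states.

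\textbf{Initial value.} Since the initial state does not depend on the input, $W^0=\delta^\dagger\Gamma\delta=\Abs{\Gamma}$.

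\textbf{Per-query change.} Non-query unitaries leave all inner products unchanged. Write the oracle as acting on the query register $|i\rangle$. Then $\langle\psi_x|\psi_y\rangle$ changes only on the components with $x_i\neq y_i$, and the change is $\sum_i\langle\psi_{x,i}|(O_x^\dagger O_y-I)|\psi_{y,i}\rangle$, with magnitude at most $2\sum_i\Abs{\psi_{x,i}}\Abs{\psi_{y,i}}\mathbf{1}[x_i\ne y_i]$. Summing with the weights, the change in $W$ is bounded by
\[
2\sum_i \alpha_i^\dagger\,\lvert\Gamma\circ\Delta_i\rvert\,\beta_i ,
\]
where $\alpha_i[x]=\lvert\delta_x\rvert\,\Abs{\psi_{x,i}}$. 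Here I must be careful for negative weights: I keep the entrywise structure by writing the change as $\sum_i$ of $(\Gamma\circ\Delta_i)\circ(\text{matrix of inner products})$ and bounding it via $\Abs{A\circ B}\le\Abs{A}$ when $B$ is a Gram matrix of vectors of norm at most one. This gives a change of at most $2\sum_i\Abs{\Gamma\circ\Delta_i}\,\Abs{\alpha_i}^2$. By Cauchy–Schwarz and $\sum_i\Abs{\alpha_i}^2=1$, this is at most $2\max_i\Abs{\Gamma\circ\Delta_i}$ per query.

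\textbf{Final value.} Because $\Gamma[x,y]=0$ whenever $f(x)=f(y)$, only pairs with different answers contribute. Correctness with error $\epsilon$ forces $\lvert\langle\psi_x^T|\psi_y^T\rangle\rvert\le 2\sqrt{\epsilon(1-\epsilon)}$ for such pairs, by the fidelity bound on distinguishability. The Hadamard-product norm argument then gives $\lvert W^T\rvert\le 2\sqrt{\epsilon(1-\epsilon)}\,\Abs{\Gamma}$.

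\textbf{Conclusion.} Combining the three steps, $(1-2\sqrt{\epsilon(1-\epsilon)})\Abs{\Gamma}\le 2T\max_i\Abs{\Gamma\circ\Delta_i}$, which is the claimed bound.

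\textbf{Main obstacle.} The final-value bound is the delicate step, because the output Gram matrix is not exactly of the form ``small off-diagonal blocks.'' To handle it, I would decompose each final state into its accepting and rejecting projections, $|\psi_x\rangle=|\psi_x^0\rangle+|\psi_x^1\rangle$. Then
\[
\langle\psi_x|\psi_y\rangle=\langle\psi_x^0|\psi_y^0\rangle+\langle\psi_x^1|\psi_y^1\rangle ,
\]
and for $f(x)\ne y$'s label each term is a Gram entry of vectors with norms $\sqrt{\epsilon}$ and $\sqrt{1-\epsilon}$ (or the reverse). Applying $\Abs{\Gamma\circ(A^\dagger B)}\le\Abs{\Gamma}\max\Abs{a_x}\max\Abs{b_y}$ to each term yields the factor $2\sqrt{\epsilon(1-\epsilon)}$.
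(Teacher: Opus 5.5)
The paper states this lemma without proof (it is the Høyer--Lee--\v{S}palek negative-weight adversary bound), so the benchmark is the standard progress-measure argument, which is what you outline. Your setup, $W^0=\lVert\Gamma\rVert$, and the per-query step are fine once you discard the entrywise $\lvert\Gamma\circ\Delta_i\rvert$ version, which fails for signed weights. Use instead the two-family Schur bound $\lVert A\circ[\langle a_x|b_y\rangle]\rVert\le\lVert A\rVert\max_x\lVert a_x\rVert\max_y\lVert b_y\rVert$; on the support of $\Delta_i$ the query acts on block $i$ by a fixed unitary, which supplies the factor $2$. Then finish with $\sum_i\lVert\alpha_i\rVert^2=1$.

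The gap is in the final-value step. As you note, an entrywise bound $\lvert\langle\psi_x^T|\psi_y^T\rangle\rvert\le 2\sqrt{\epsilon(1-\epsilon)}$ does not control $\lVert\Gamma\circ\rho^T\rVert$ for signed $\Gamma$. Your repair is also wrong as written, because the error probability $\epsilon_x$ is input-dependent and only satisfies $\epsilon_x\le\epsilon$. For $f(x)=0$ you know $\lVert\psi_x^1\rVert\le\sqrt{\epsilon}$, but only $\lVert\psi_x^0\rVert\le 1$; $\sqrt{1-\epsilon}$ is a lower bound, not an upper bound. You cannot reduce to the case where every $\epsilon_x=\epsilon$. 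Applying the max-norm Schur bound separately to the two terms therefore gives only $\sqrt{\epsilon}\,\lVert\Gamma\rVert$ each. That yields $\lvert W^T\rvert\le 2\sqrt{\epsilon}\,\lVert\Gamma\rVert$ and $Q_\epsilon(f)\ge\frac{1-2\sqrt{\epsilon}}{2}\mathrm{ADV}(f)$, which is vacuous for $\epsilon\ge 1/4$, in particular for the usual $\epsilon=1/3$.

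The missing idea is to treat the two terms jointly with a rescaling. For $f(x)=0$ set $a_x=c\,\psi_x^0+c^{-1}\psi_x^1$, and for $f(y)=1$ set $b_y=c^{-1}\psi_y^0+c\,\psi_y^1$, with $c^4=\epsilon/(1-\epsilon)$. Since the two components are orthogonal, $\langle a_x|b_y\rangle=\langle\psi_x^T|\psi_y^T\rangle$. Moreover $\lVert a_x\rVert^2=c^2+(c^{-2}-c^2)\epsilon_x\le c^2(1-\epsilon)+c^{-2}\epsilon=2\sqrt{\epsilon(1-\epsilon)}$, using $c\le 1$ for $\epsilon\le 1/2$, and likewise $\lVert b_y\rVert^2\le 2\sqrt{\epsilon(1-\epsilon)}$. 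Because $\Gamma$ is supported on the blocks $f^{-1}(0)\times f^{-1}(1)$ and its transpose, this gives $\lvert W^T\rvert\le\lVert\Gamma\circ\rho^T\rVert=\lVert\Gamma_{01}\circ[\langle a_x|b_y\rangle]\rVert\le 2\sqrt{\epsilon(1-\epsilon)}\,\lVert\Gamma\rVert$.

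An equivalent fix uses three ingredients. First, the weighted bound $\lvert u^\dagger(A\circ[\langle a_x|b_y\rangle])v\rvert\le\lVert A\rVert\,(\sum_x\lvert u_x\rvert^2\lVert a_x\rVert^2)^{1/2}(\sum_y\lvert v_y\rvert^2\lVert b_y\rVert^2)^{1/2}$. Second, a principal eigenvector of such a bipartite $\Gamma$ has mass $1/2$ on each side. Third, $\sqrt{(1-a)b}+\sqrt{a(1-b)}$ is monotone on $[0,1/2]^2$. With either replacement your conclusion goes through.
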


The Boolean adversary method applies to deterministic inputs. For the inner layer
($\textbf{DD}$), however, the input consists of $R$ independent samples from an unknown
distribution, an average-case rather than a worst-case setting. We therefore need the
average-case quantum adversary method of~\cite{belovs2018provably}, which provides a lower bound on the
query complexity of distinguishing two probability distributions $\mathcal{P}$
and $\mathcal{Q}$.

\begin{lemma}[The average-case quantum adversary lower bound method, Theorem 4 in \cite{belovs2018provably}]\label{lem: the average-case quantum adversary lower bound method}
Let $\mathcal{P}$ and $\mathcal{Q}$ be two probability distributions on $\mathcal{S}$, and let $p_x$ and $q_y$ denote the probabilities of $x$ and $y$ in $\mathcal{P}$ and $\mathcal{Q}$, respectively. Let $s_{\mathcal{P}}, s_{\mathcal{Q}} \in \sbra{0, 1}$ be the acceptance probabilities on $\mathcal{P}$ and $\mathcal{Q}$, respectively. For a matrix $\Gamma$, define the adversary bound with respect to $\Gamma, \mathcal{P}, s_{\mathcal{P}}, \mathcal{Q}, s_{\mathcal{Q}}$ as
\begin{align}
\overline{\mathrm{Adv}}\rbra{\Gamma; \mathcal{P}, s_{\mathcal{P}}, \mathcal{Q}, s_{\mathcal{Q}}}
= \Omega\rbra{\min_{j \in \sbra{n}} \frac{\delta_{\mathcal{P}}^\top \Gamma \delta_{\mathcal{Q}} - \tau\rbra{s_{\mathcal{P}}, s_{\mathcal{Q}}} \Abs{\Gamma}}{\Abs{\Gamma \circ \Delta_j}}}.
\end{align}
The vectors $\delta_{\mathcal{P}}\sbra{x} = \sqrt{p_x}$ and $\delta_{\mathcal{Q}}\sbra{y} = \sqrt{q_y}$ are unit vectors in $\mathbb{R}^{\mathcal{S}}$; for $j \in \sbra{n}$, the $|\mathcal{S}| \times |\mathcal{S}|$ matrix $\Delta_j$ is defined by $\Delta_j\sbra{x, y} = 1_{x_j \neq y_j}$; and
\begin{align}
\tau\rbra{s_{\mathcal{P}}, s_{\mathcal{Q}}} = \sqrt{s_{\mathcal{P}}s_{\mathcal{Q}}} + \sqrt{\rbra{1 - s_{\mathcal{P}}}\rbra{1 - s_{\mathcal{Q}}}}.
\end{align}
Let $\mathcal{A}$ be a quantum algorithm that makes $T$ queries to an input string $x = \rbra{x_1, \dots, x_n} \in \mathcal{S}$ and then either accepts or rejects. Let $\mathcal{P}$ and $\mathcal{Q}$ be two probability distributions on $\mathcal{S}$, and let $s_{\mathcal{P}}$ and $s_{\mathcal{Q}}$ be the acceptance probabilities of $\mathcal{A}$ when $x$ is sampled from $\mathcal{P}$ and $\mathcal{Q}$, respectively. Then
\begin{align}
T \ge \overline{\mathrm{Adv}}\rbra{\Gamma; \mathcal{P}, s_{\mathcal{P}}; \mathcal{Q}, s_{\mathcal{Q}}}
\end{align}
for any $|\mathcal{S}| \times |\mathcal{S}|$ matrix $\Gamma$.
\end{lemma}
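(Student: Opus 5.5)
The plan is a progress-function (hybrid) argument weighted by the two distributions, in the style of the spectral adversary proof behind Lemma~\ref{lem: Adversary bound for Boolean functions}. For each input $x\in\mathcal{S}$, let $\ket{\psi_x^t}$ be the state of $\mathcal{A}$ after $t$ queries. Define the weighted vectors $u^t=\sum_x\sqrt{p_x}\ket{x}\ket{\psi_x^t}$ and $v^t=\sum_y\sqrt{q_y}\ket{y}\ket{\psi_y^t}$, both unit vectors. The progress measure is
\[
W_t=\sum_{x,y}\sqrt{p_x}\,\Gamma[x,y]\,\sqrt{q_y}\,\langle\psi_x^t|\psi_y^t\rangle=(u^t)^\dagger(\Gamma\otimes I)v^t .
\]
Since the initial state does not depend on $x$, we have $W_0=\delta_{\mathcal{P}}^\top\Gamma\delta_{\mathcal{Q}}$.

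\textbf{Step 1 (final value).} Write the final states as $\ket{\psi_x^T}=\ket{a_x}+\ket{r_x}$, the projections onto the accepting and rejecting subspaces. Cross terms vanish, so $W_T$ splits into an accepting part and a rejecting part. Put $u_a=\sum_x\sqrt{p_x}\ket{x}\ket{a_x}$ and $v_a=\sum_y\sqrt{q_y}\ket{y}\ket{a_y}$, and define $u_r,v_r$ analogously. Then
\[
|W_T|\le\|\Gamma\|\bigl(\|u_a\|\|v_a\|+\|u_r\|\|v_r\|\bigr)=\|\Gamma\|\bigl(\sqrt{s_{\mathcal{P}}s_{\mathcal{Q}}}+\sqrt{(1-s_{\mathcal{P}})(1-s_{\mathcal{Q}})}\bigr)=\tau\|\Gamma\| ,
\]
because $\|u_a\|^2=\sum_xp_x\|a_x\|^2=s_{\mathcal{P}}$, and similarly for the other norms.

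\textbf{Step 2 (per-query change).} Unitaries that do not depend on the input leave the inner products, and hence $W_t$, unchanged. For a query $O_x$, split $\ket{\psi_x^t}=\sum_j\ket{\psi_{x,j}^t}$ according to the queried index $j$. Then
\[
\langle\psi_x|\psi_y\rangle-\langle O_x\psi_x|O_y\psi_y\rangle=\sum_j\langle\psi_{x,j}|(I-O_x^\dagger O_y)|\psi_{y,j}\rangle .
\]
The $j$-th term vanishes unless $x_j\neq y_j$, so only the entries of $\Gamma\circ\Delta_j$ survive. The operator $I-O_x^\dagger O_y$ has norm at most $2$; I would handle it by splitting it as $I$ minus a product of two unitaries and applying Cauchy–Schwarz to each piece. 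With $u_j=\sum_x\sqrt{p_x}\ket{x}\ket{\psi_{x,j}}$ and $v_j$ defined analogously, this gives
\[
|W_t-W_{t+1}|\le2\sum_j\|\Gamma\circ\Delta_j\|\,\|u_j\|\,\|v_j\|\le2\max_j\|\Gamma\circ\Delta_j\| .
\]
The last inequality is Cauchy–Schwarz together with $\sum_j\|u_j\|^2=\sum_j\|v_j\|^2=1$.

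\textbf{Step 3 (combine).} Telescoping gives
\[
\delta_{\mathcal{P}}^\top\Gamma\delta_{\mathcal{Q}}-\tau\|\Gamma\|\le W_0-W_T\le2T\max_j\|\Gamma\circ\Delta_j\| ,
\]
which yields the stated bound up to the constant absorbed by $\Omega$. The denominator here is the worst $j$, i.e.\ a maximum of $\|\Gamma\circ\Delta_j\|$ over $j$; I read the statement's ``$\min_j$'' of the whole ratio as exactly this.

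The main obstacle is Step 2. Because $u$ and $v$ are weighted by different distributions and $\Gamma$ need not be Hermitian, the oracle term $O_x^\dagger O_y$ is not a fixed operator on the combined space. I would write it as $\bigl(\bigoplus_x O_x\bigr)^\dagger$ followed by $\bigoplus_y O_y$ acting separately on the two sides, so that each side stays a contraction. Then I would verify that the $\Gamma\circ\Delta_j$ mask still factors out as a single operator norm.
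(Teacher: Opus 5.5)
Your argument is correct. It is the standard weighted progress-function proof behind the cited Theorem~4; the paper itself only imports this lemma and gives no proof. Your version yields the explicit bound $T\ge(\delta_{\mathcal{P}}^\top\Gamma\delta_{\mathcal{Q}}-\tau\Abs{\Gamma})/(2\max_j\Abs{\Gamma\circ\Delta_j})$.

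The Step~2 ``obstacle'' is already resolved by your own construction, so nothing further needs checking. After replacing $\Gamma$ by $\Gamma\circ\Delta_j$ in the $j$-th term, the $O_x^\dagger O_y$ piece equals $\tilde u_j^\dagger((\Gamma\circ\Delta_j)\otimes I)\tilde v_j$, where $\tilde u_j=\sum_x\sqrt{p_x}\ket{x}O_x\ket{\psi_{x,j}}$ satisfies $\Abs{\tilde u_j}=\Abs{u_j}$, and likewise for $\tilde v_j$. The only remaining care point is to take real parts when comparing $W_0$ and $W_T$ if $\Gamma$ is complex.
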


\subsection{Lower Bound for Quantum Counting}

As a warm-up, we give a self-contained proof of the lower bound for the outer
$\textbf{Count}$ problem~\cite{brassard2002quantum}, rephrased in our notation.

\begin{lemma}[Lower bound for quantum counting]\label{lem:counting-lower}
Any quantum algorithm that solves $\textbf{Count}$ with bounded error requires $\Omega\big(\frac{\sqrt{\rho}}{\epsilon}\big)$ queries.
\end{lemma}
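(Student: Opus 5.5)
We apply the Boolean adversary bound (Lemma~\ref{lem: Adversary bound for Boolean functions}) to $\mathcal{F}_{m,\rho,\epsilon}$, using the $0$-$1$ matrix advertised in the techniques section. Write $k=\rho m$ and $\ell=\epsilon m$, and let $X=\{x:|x|_1=k\}$ and $Y=\{y:|y|_1=k+\ell\}$. Define $\Gamma$ to be the symmetric block matrix with $\Gamma[x,y]=\Gamma[y,x]=1$ whenever $x\in X$, $y\in Y$ and $x\le y$ coordinatewise, and $0$ elsewhere. Since $\Gamma$ only connects inputs with different $\mathcal{F}$-values, it is a valid adversary matrix. By symmetry $\Gamma$ is biregular: each $x\in X$ has $d_X=\binom{m-k}{\ell}$ neighbours in $Y$, and each $y\in Y$ has $d_Y=\binom{k+\ell}{\ell}$ neighbours in $X$. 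For a biregular bipartite $0$-$1$ matrix, $\Abs{\Gamma}=\sqrt{d_Xd_Y}$, witnessed by the vector equal to $d_Y^{1/4}$ on $X$ and $d_X^{1/4}$ on $Y$ (equivalently, by the all-ones-type lower bound); the matching upper bound follows from Schur's test.

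Next we bound $\Abs{\Gamma\circ\Delta_i}$ for a fixed coordinate $i$. The entry $\Gamma\circ\Delta_i[x,y]$ is nonzero exactly when $x\le y$, $x_i=0$ and $y_i=1$. This matrix is again bipartite, and its degrees can be computed directly. A row $x$ with $x_i=0$ has degree $\binom{m-k-1}{\ell-1}=d_X\cdot\frac{\ell}{m-k}$, and rows with $x_i=1$ have degree $0$. A column $y$ with $y_i=1$ has degree $\binom{k+\ell-1}{\ell-1}=d_Y\cdot\frac{\ell}{k+\ell}$. Applying Schur's test (the norm of a nonnegative matrix is at most the square root of the product of the maximum row sum and the maximum column sum) gives
\[
\Abs{\Gamma\circ\Delta_i}\le\sqrt{d_Xd_Y}\cdot\frac{\ell}{\sqrt{(m-k)(k+\ell)}}.
\]
Dividing,
\[
\mathrm{ADV}(\mathcal{F})\ge\frac{\sqrt{(m-k)(k+\ell)}}{\ell}=\frac{\sqrt{(1-\rho)(\rho+\epsilon)}}{\epsilon}.
\]
Under the natural regime $\rho\le 1/2$ and $\epsilon\le\rho$ this is $\Omega(\sqrt{\rho}/\epsilon)$. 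For $\rho$ near $1$ one can either complement the inputs or simply assume the standard regime. Lemma~\ref{lem: Adversary bound for Boolean functions} with constant error then gives $Q(\mathcal{F})=\Omega(\sqrt{\rho}/\epsilon)$.

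The main obstacle is getting the degree counts and norms exactly right, in particular checking that the ratio $\ell/\sqrt{(m-k)(k+\ell)}$ comes out cleanly. A related point is justifying the equality $\Abs{\Gamma}=\sqrt{d_Xd_Y}$ rather than only an inequality; however, only the lower bound on $\Abs{\Gamma}$ and the upper bound on $\Abs{\Gamma\circ\Delta_i}$ are needed, and both follow from elementary arguments. The last thing to confirm is the integrality assumptions: $\rho m$ and $\epsilon m$ must be integers, which we assume, as the definition of $\textbf{Count}$ implicitly does.
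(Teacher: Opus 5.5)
Your argument is the paper's proof. It uses the same bipartite $0$-$1$ matrix on $X\times Y$ with $x\le y$, and the same counts $\binom{m-\rho m}{\epsilon m}$, $\binom{(\rho+\epsilon)m}{\rho m}$ for $\Gamma$ and $\binom{m-\rho m-1}{\epsilon m-1}$, $\binom{(\rho+\epsilon)m-1}{\rho m}$ for $\Gamma\circ\Delta_i$. It bounds $\Abs{\Gamma}$ from below by row/column sums and $\Abs{\Gamma\circ\Delta_i}$ from above by a Schur-type bound, giving the ratio $\sqrt{(1-\rho)(\rho+\epsilon)}/\epsilon$. You are also right, where the paper is silent, that reaching $\Omega(\sqrt{\rho}/\epsilon)$ needs $1-\rho=\Omega(1)$.

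There are two small slips. First, the vector with $d_Y^{1/4}$ on $X$ and $d_X^{1/4}$ on $Y$ does not attain $\sqrt{d_Xd_Y}$ unless $d_X=d_Y$. The right witness is $\sqrt{d_X}$ on $X$ and $\sqrt{d_Y}$ on $Y$, i.e.\ the blockwise normalized all-ones vectors your parenthetical mentions. Second, complementing the inputs cannot rescue $\rho$ near $1$: the ratio is invariant under complementation. The stated bound is in fact false there, since for $\rho=1-\epsilon$ the problem is a search solvable with $O(1/\sqrt{\epsilon})$ queries.
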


\begin{proof}
Let $X = \cbra{x \in \cbra{0,1}^m : |x|_1 = \rho m}$ and $Y = \cbra{y \in \cbra{0,1}^m : |y|_1 = (\rho+\epsilon)m}$. Define an adversary matrix $\Gamma_{\mathcal{F}}$ by
\begin{align}
    \Gamma_{\mathcal{F}}\sbra{x,y} =
    \begin{cases}
        1, & \text{if } x \in X,\; y \in Y,\; x \leq y,\\
        0, & \text{otherwise},
    \end{cases}
\end{align}
where $x \leq y$ denotes entrywise inequality.

For any $x \in X$, the number of $y \in Y$ with $x \leq y$ is at least $\binom{m-\rho m}{\epsilon m}$, since we may flip $\epsilon m$ zeros of $x$ to ones. Hence $\sum_y \Gamma_{\mathcal{F}}\sbra{x,y} \geq \binom{m-\rho m}{\epsilon m}$. Similarly, for any $y \in Y$, we have $\sum_x \Gamma_{\mathcal{F}}\sbra{x,y} \geq \binom{(\rho+\epsilon)m}{\rho m}$. It follows that
\begin{align}
    \Abs{\Gamma_{\mathcal{F}}} \geq \sqrt{\binom{m-\rho m}{\epsilon m}\binom{(\rho+\epsilon)m}{\rho m}}.
\end{align}

For the perturbation term, fix $i \in [m]$. For any $x \in X$, the number of $y \in Y$ such that $x \leq y$ and $x_i \neq y_i$ is at most $\binom{m-\rho m-1}{\epsilon m-1}$. For any $y \in Y$, the number of $x \in X$ such that $x \leq y$ and $x_i \neq y_i$ is at most $\binom{(\rho+\epsilon)m-1}{\rho m}$. Hence
\begin{align}
    \Abs{\Gamma_{\mathcal{F}} \circ \Delta_i} \leq \sqrt{\binom{m-\rho m-1}{\epsilon m-1}\binom{(\rho+\epsilon)m-1}{\rho m}}.
\end{align}

Applying Lemma~\ref{lem: Adversary bound for Boolean functions}, we obtain
\begin{align}
    T \geq \frac{\sqrt{\binom{m-\rho m}{\epsilon m}\binom{(\rho+\epsilon)m}{\rho m}}}{\sqrt{\binom{m-\rho m-1}{\epsilon m-1}\binom{(\rho+\epsilon)m-1}{\rho m}}}
    = \Omega\rbra{\frac{\sqrt{\rho}}{\epsilon}}.
\end{align}
\end{proof}

The above lemma only addresses the Boolean outer layer. To obtain the full lower bound
for $\textbf{Count}\circ\textbf{DD}$, we need a composition theorem that combines the Boolean adversary bound with the average-case adversary bound, which is the subject of the next subsection.

\subsection{Composition Theorem for Boolean-over-average-case functions}

We now combine the two adversary methods to prove a lower bound for the composed problem
$\textbf{Count}\circ\textbf{DD}$. The outer function $\mathcal{F}$ is analyzed via the
Boolean adversary method (Lemma~\ref{lem: Adversary bound for Boolean functions}), while
each inner function $\mathcal{G}_i$ requires the average-case method
(Lemma~\ref{lem: the average-case quantum adversary lower bound method}).
The main challenge is to construct inner adversary objects that are
\emph{compatible} with the outer adversary matrix, so that the composed bound factors
cleanly into a product of the outer and inner bounds.

\begin{theorem}[Composition lower bound]\label{thm:composition}
Any quantum algorithm that solves $\textbf{Count}\circ\textbf{DD}$ requires
\begin{align}
    T \ge \Omega\!\left(\frac{\sqrt{\rho}}{d_H(P,Q)\,\epsilon}\right)
\end{align}
queries, where $d_H(P,Q)$ denotes the Hellinger distance between $P$ and $Q$.
\end{theorem}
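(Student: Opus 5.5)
We apply the average-case adversary bound (Lemma~\ref{lem: the average-case quantum adversary lower bound method}) to the composed problem $\mathcal{H}=\mathcal{F}_{m,\rho,\epsilon}\circ(\mathcal{G}_1,\dots,\mathcal{G}_m)$. The two input distributions are the following. $\mathcal{P}$ chooses $\tilde{x}\in X$ uniformly, where $X=\{|\tilde{x}|_1=\rho m\}$. It then draws $x_i\sim P^{\otimes R}$ if $\tilde{x}_i=0$ and $x_i\sim Q^{\otimes R}$ if $\tilde{x}_i=1$. The distribution $\mathcal{Q}$ is defined the same way with $\tilde{y}\in Y$. Which of $P$, $Q$ is labeled ``positive'' is a convention and does not affect the bound.

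\emph{Construction.} The square-root vectors are $\delta_{\mathcal{P}}=|X|^{-1/2}\bigoplus_{\tilde{x}}\bigotimes_i\delta^{\tilde{x}_i}$ and $\delta_{\mathcal{Q}}=|Y|^{-1/2}\bigoplus_{\tilde{y}}\bigotimes_i\delta^{\tilde{y}_i}$, with $\delta^0=\vec{p}^{\otimes R}$ and $\delta^1=\vec{q}^{\otimes R}$. Strictly these are sums over labels rather than direct sums, because the same sample string can arise from different labels. The cleanest fix is to keep the label as a hidden coordinate, i.e.\ to work on the lifted space $\mathcal{S}\times\{0,1\}^m$ and to index $\Gamma$ by pairs. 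The index $j$ in $\Delta_j$ then ranges over sample positions $(i,r)$ only. Take $G=\vec{p}\vec{q}^{\,\top}+G'$, where $G'$ acts on $\vec{q}^{\perp}\to\vec{p}^{\perp}$ as an isometry. Then $G\vec{q}=\vec{p}$ and $\|G\|=1$. Define the blocks by
\[
\Gamma_{\mathcal{H}}^{\tilde{x},\tilde{y}}=\Gamma_{\mathcal{F}}[\tilde{x},\tilde{y}]\bigotimes_i\Gamma^{\tilde{x}_i\tilde{y}_i},
\]
with $\Gamma^{00}=\Gamma^{11}=I$, $\Gamma^{01}=G^{\otimes R}$, and $\Gamma^{10}=(G^{\otimes R})^\dagger$. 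Here $\Gamma_{\mathcal{F}}$ is the $0$-$1$ matrix from Lemma~\ref{lem:counting-lower}.

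\emph{Key steps.} (1) The numerator factors as $\delta_{\mathcal{P}}^\top\Gamma_{\mathcal{H}}\delta_{\mathcal{Q}}=\frac{1}{\sqrt{|X||Y|}}\mathbf{1}^\top\Gamma_{\mathcal{F}}\mathbf{1}$, because every block inner product equals $1$. This quantity equals $\sqrt{\binom{m-\rho m}{\epsilon m}\binom{(\rho+\epsilon)m}{\rho m}}$, since $\Gamma_{\mathcal{F}}$ is biregular.

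(2) The spectral norm satisfies $\|\Gamma_{\mathcal{H}}\|\le\|\Gamma_{\mathcal{F}}\|$, since each block is a tensor product of operators of norm $\le1$. Biregularity gives $\|\Gamma_{\mathcal{F}}\|$ equal to the same value as in step (1). So the numerator equals $\|\Gamma_{\mathcal{F}}\|$, and the $\tau$ term costs only a constant factor for a bounded-error algorithm. Take $\tau\le 1-c$ and subtract $\tau\|\Gamma_{\mathcal{H}}\|$.

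(3) Fix a sample position $(i,r)$ and bound the denominator $\|\Gamma_{\mathcal{H}}\circ\Delta_{(i,r)}\|$. On the pairs with $\tilde{x}_i=\tilde{y}_i$, the inner block is $I$, and the identity has zero entries off the diagonal, so these entries vanish under $\circ\Delta_{(i,r)}$. What remains is $\Gamma_{\mathcal{F}}\circ\Delta_i$ tensored with
\[
\bigl(G\circ\Delta\bigr)\otimes G^{\otimes(R-1)}\otimes(\cdots),
\]
where $\Delta$ is the off-diagonal indicator on $[n]$ and the remaining factors carry the other coordinates.

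(4) Bound $\|G\circ\Delta\|$. We have $G-G\circ\Delta=\diag(G)$, and $G$ can be chosen so that its diagonal is close to the identity when $P\approx Q$. A good choice is the unitary that rotates $\vec{q}$ to $\vec{p}$ within their span and acts as the identity on the orthogonal complement. This gives $\|G\circ\Delta\|\le\|G-I\|+\|\diag(G-I)\|\le2\|\vec{p}-\vec{q}\|=O(d_H(P,Q))$. Hence
\[
\|\Gamma_{\mathcal{H}}\circ\Delta_{(i,r)}\|\le O(d_H)\,\|\Gamma_{\mathcal{F}}\circ\Delta_i\|.
\]

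(5) Combine. Dividing step (2) by step (4) and applying the ratio computed in Lemma~\ref{lem:counting-lower} gives $T=\Omega\bigl(\frac{\sqrt{\rho}}{\epsilon\,d_H(P,Q)}\bigr)$.

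\emph{Main obstacle.} I expect step (3)--(4) to be the hardest. The Hadamard product with $\Delta_{(i,r)}$ does not factor across the tensor structure and the outer label structure at the same time. In particular, it also hits blocks with $\tilde{x}_i=\tilde{y}_i$ but coordinate $i$ unequal elsewhere. To handle this, I would write $\Gamma_{\mathcal{H}}\circ\Delta_{(i,r)}=\Gamma_{\mathcal{H}}-\Gamma_{\mathcal{H}}\circ(J-\Delta_{(i,r)})$. The second term acts block-diagonally in the $(i,r)$ sample value, which lets me express the result through $\Gamma_{\mathcal{F}}\circ\Delta_i$ times the factor $(G-I)$ plus a diagonal correction. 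I would then bound each piece by $O(\|\vec{p}-\vec{q}\|)\,\|\Gamma_{\mathcal{F}}\circ\Delta_i\|$. If the equal-label blocks still contribute, I would use the flexibility to choose $\Gamma^{00}$ and $\Gamma^{11}$ diagonal, which keeps the numerator equal to $1$ and makes those blocks vanish under $\circ\Delta_{(i,r)}$.
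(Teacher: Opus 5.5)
Your proposal is correct and follows the same route as the paper. Both use the inner blocks $I$, $G^{\otimes R}$, $(G^{\otimes R})^\dagger$, the tensor-product composed matrix $\Gamma_{\mathcal{H}}$, the compatibility identity that collapses the numerator to a sum over $\Gamma_{\mathcal{F}}$, and the denominator bound $\|\Gamma_{\mathcal{F}}\circ\Delta_i\|\,\|G\circ\alpha\|$.

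Your version is more careful than the paper's in three places:
\begin{enumerate}
\item You normalize by $\sqrt{|X||Y|}$ rather than $2^{m/2}$, so the numerator really equals $\sqrt{\binom{m-\rho m}{\epsilon m}\binom{(\rho+\epsilon)m}{\rho m}}$.
\item You absorb the $\tau\|\Gamma_{\mathcal{H}}\|$ term via $\|\Gamma_{\mathcal{H}}\|\le\|\Gamma_{\mathcal{F}}\|$, which the paper does not address.
\item You exhibit an explicit rotation $G$ with $\|G\circ\alpha\|\le 2\|\vec{p}-\vec{q}\|$ instead of citing the Hellinger bound.
\end{enumerate}

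The obstacle you anticipate in steps (3)--(4) does not actually arise. The $\tilde{x}_i=\tilde{y}_i$ blocks vanish under $\circ\Delta_{(i,r)}$, and the same block-norm argument you used in step (2) gives the denominator bound directly.
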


\begin{proof}
The proof proceeds in four steps: construct the inner adversary objects, assemble the
composed adversary matrix, evaluate the numerator and denominator of the adversary bound,
and factor the ratio to obtain the product lower bound.

\medskip
\noindent\textbf{Step 1: Inner adversary construction.}
Let $P$ and $Q$ be two distributions on $[n]$, and let
$\vec{p}=(\sqrt{p_1},\dots,\sqrt{p_n})^\top$ and
$\vec{q}=(\sqrt{q_1},\dots,\sqrt{q_n})^\top$ be the vectors of square-root probabilities.
Let $G$ be an $n\times n$ matrix such that $G\vec{q}=\vec{p}$. Such a matrix always exists
and satisfies $\|G\|=1$. For $R$ independent samples, define the inner adversary matrices
and the associated vectors by
\begin{align}
    \Gamma_{\mathcal{G}}^{0,1}=G^{\otimes R},\qquad
    \Gamma_{\mathcal{G}}^{1,0}=(G^{\otimes R})^\dagger,\qquad
    \Gamma_{\mathcal{G}}^{0,0}=\Gamma_{\mathcal{G}}^{1,1}=I,
\end{align}
and
\begin{align}
    \delta^{0}=\vec{p}^{\,\otimes R},\qquad
    \delta^{1}=\vec{q}^{\,\otimes R}.
\end{align}
These vectors encode the joint distribution of $R$ independent samples from $P$ and $Q$,
respectively. The tensor-power structure yields the crucial compatibility identity
\begin{align}\label{eq:compatibility}
    (\delta^{0})^\dagger G^{\otimes R}\delta^{1}
    = (\vec{p}^{\,\dagger} G\,\vec{q})^{R}
    = (\vec{p}^{\,\dagger}\vec{p})^{R}
    = 1.
\end{align}
Similarly, $(\delta^{1})^\dagger (G^{\otimes R})^\dagger\delta^{0}=1$ and
$(\delta^{b})^\dagger I\,\delta^{b}=1$ for $b\in\{0,1\}$. Thus every inner product of the
form $(\delta^{a})^\dagger \Gamma_{\mathcal{G}}^{a,b}\delta^{b}$ equals $1$, which is the
compatibility condition that will allow the composed bound to factor.

\medskip
\noindent\textbf{Step 2: Composed adversary matrix.}
For the outer function $\mathcal{F}=\mathcal{F}_{m,\rho,\epsilon}$, let $\Gamma_{\mathcal{F}}$
be the Boolean adversary matrix defined in the proof of
Lemma~\ref{lem:counting-lower}. Define the composed adversary matrix $\Gamma_{\mathcal{H}}$
by
\begin{align}
    \Gamma_{\mathcal{H}}^{\tilde{x},\tilde{y}}
    = \Gamma_{\mathcal{F}}[\tilde{x},\tilde{y}]\;
      \bigotimes_{i=1}^{m} \Gamma_{\mathcal{G}_i}^{\tilde{x}_i,\tilde{y}_i},
\end{align}
where $\tilde{x},\tilde{y}\in\{0,1\}^m$ are the outer-layer bit strings. For the
average-case analysis, the input distributions over outer labels are uniform:
$\delta_{\mathcal{P}}=\delta_{\mathcal{Q}}=\frac{1}{\sqrt{2^{m}}}
\bigl(\delta^{00\ldots00},\,\delta^{00\ldots01},\,\ldots,\,\delta^{11\ldots11}\bigr)$,
where $\delta^{\tilde{x}}=\bigotimes_{i=1}^{m}\delta^{\tilde{x}_i}$ and
$\delta^{\tilde{y}}=\bigotimes_{i=1}^{m}\delta^{\tilde{y}_i}$.

\medskip
\noindent\textbf{Step 3: Evaluating the adversary ratio.}
For the numerator, the compatibility condition~\eqref{eq:compatibility} ensures that
every inner product $(\delta^{\tilde{x}_i})^\dagger\Gamma_{\mathcal{G}_i}^{\tilde{x}_i,\tilde{y}_i}
\delta^{\tilde{y}_i}=1$. Hence
\begin{align}
    (\delta_{\mathcal{P}})^\dagger\Gamma_{\mathcal{H}}\delta_{\mathcal{Q}}
    &= \frac{1}{2^{m}}\sum_{\tilde{x},\tilde{y}\in\{0,1\}^m}
       \Gamma_{\mathcal{F}}[\tilde{x},\tilde{y}]
       \prod_{i=1}^{m}
       (\delta^{\tilde{x}_i})^\dagger\Gamma_{\mathcal{G}_i}^{\tilde{x}_i,\tilde{y}_i}
       \delta^{\tilde{y}_i} \nonumber\\
    &= \frac{1}{2^{m}}\sum_{\tilde{x},\tilde{y}\in\{0,1\}^m}
       \Gamma_{\mathcal{F}}[\tilde{x},\tilde{y}] \nonumber\\
    &\ge \sqrt{\binom{m-\rho m}{\epsilon m}\binom{(\rho+\epsilon)m}{\rho m}}.
\end{align}
The last inequality uses the spectral norm bound on $\Gamma_{\mathcal{F}}$ established in
the proof of Lemma~\ref{lem:counting-lower}.

For the denominator, fix a query index $(i,r)$ where $i\in[m]$ identifies the distribution
and $r\in[R]$ identifies the sample within that distribution. Let $\Delta_{i,r}$ be defined by $\Delta_{i,r}\sbra{x, y} = 1_{x_{i,r} \neq y_{i,r}}$, and let $\alpha$ be the $n\times n$ matrix given by $\alpha[x,y]=1_{x\neq y}$. By the tensor-product structure,
\begin{align}
    \|\Gamma_{\mathcal{H}}\circ\Delta_{i,r}\|
    &= \|\Gamma_{\mathcal{F}}\circ\Delta_{i}\|\;
       \|G^{\otimes R}\circ\Delta_{r}\|
    = \|\Gamma_{\mathcal{F}}\circ\Delta_{i}\|\;
       \|G\circ\alpha\|\nonumber\\
    &\le \sqrt{\binom{m-\rho m-1}{\epsilon m-1}
               \binom{(\rho+\epsilon)m-1}{\rho m}}\;
       \|G\circ\alpha\|,
\end{align}
By~\cite{belovs2019quantum}, $\|G\circ\Delta\|$ equals the Hellinger distance
$d_H(P,Q)$ between $P$ and $Q$ (up to a constant factor).
\medskip
\noindent\textbf{Step 4: Factoring the bound.}
Applying Lemma~\ref{lem: the average-case quantum adversary lower bound method} and
taking the ratio of the numerator to the denominator yields
\begin{align}
    T &\ge
    \frac{\sqrt{\binom{m-\rho m}{\epsilon m}\binom{(\rho+\epsilon)m}{\rho m}}}
         {\sqrt{\binom{m-\rho m-1}{\epsilon m-1}\binom{(\rho+\epsilon)m-1}{\rho m}}}
    \cdot \Omega\!\left(\frac{1}{d_H(P,Q)}\right) \\
    &= \Omega\!\left(\frac{\sqrt{\rho}}{d_H(P,Q)\,\epsilon}\right).\nonumber
\end{align}
The first factor is exactly the Boolean adversary bound for $\textbf{Count}$ (see
Lemma~\ref{lem:counting-lower}), and the second factor is the average-case adversary
contribution from $\textbf{DD}$. The two factors separate cleanly because of two
properties acting in tandem: the compatibility
condition~\eqref{eq:compatibility} collapses the numerator to a pure sum over
$\Gamma_{\mathcal{F}}$, while the tensor-product structure of
$\Gamma_{\mathcal{H}}$ ensures that the denominator factors as
$\|\Gamma_{\mathcal{F}}\circ\Delta_i\| \cdot \|G\circ\Delta\|$.
\end{proof}

Instantiating $P$ and $Q$ as Bernoulli distributions with bias gap $2\Delta$ gives
$d_H(P,Q)=\Theta(\Delta)$, which immediately yields the following corollary.

\begin{corollary}\label{cor:bernoulli-lower}
Any quantum algorithm that estimates the fraction $\rho$ of positive distributions in a
collection of $m$ Bernoulli distributions with bias gap $\Delta$ to within additive error
$\epsilon$ requires
\begin{align}
    \Omega\!\left(\frac{\sqrt{\rho}}{\Delta\epsilon}\right)
\end{align}
queries to the quantum oracle of Eq.~\eqref{eq: input oracle}.
\end{corollary}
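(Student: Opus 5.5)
We derive Corollary~\ref{cor:bernoulli-lower} from Theorem~\ref{thm:composition} in three steps: a reduction from estimation to the decision problem $\textbf{Count}\circ\textbf{DD}$, a transfer from the sample-based oracle to the coherent oracle, and a Hellinger-distance calculation.

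\textbf{Reduction to the decision problem.} Fix $P=\mathrm{Bern}(1/2+\Delta)$, which is positive, and $Q=\mathrm{Bern}(1/2-\Delta)$, which is negative. An instance of $\textbf{Count}\circ\textbf{DD}$ is a collection of $m$ distributions, each equal to $P$ or $Q$, with the fraction of $P$'s promised to be $\rho$ or $\rho+\epsilon$. Every such instance is a valid Bernoulli counting instance with gap $\Delta$. Suppose an estimator achieves additive error $\epsilon/3$ with probability $2/3$. Thresholding its output at $\rho+\epsilon/2$ then decides $\textbf{Count}\circ\textbf{DD}$ with the same success probability. Rescaling $\epsilon$ by a constant does not change the asymptotic bound.

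\textbf{Oracle transfer.} This is the step I expect to need the most care. Theorem~\ref{thm:composition} is stated for the sample-based oracle $O_B$ of Definition~\ref{def: less general input model}. I will show that one query to the coherent oracle~\eqref{eq: input oracle} can be simulated by $O(1)$ queries to $O_B$, which gives the inequality in the useful direction. The plan is to take $R$ large and implement
\[
\ket{i}\ket{\vec 0}\mapsto \ket{i}\,\frac{1}{\sqrt R}\sum_{r}\ket{r}\ket{x_{i,r}}.
\]
Here the purifying register $C=\ket{r}$ (together with further junk) plays the role of $\ket{\varphi_{i,b}}$. The empirical frequency of ones among the $x_{i,r}$ replaces $p_i$. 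Two things must then be checked. First, with high probability over the samples, all empirical biases stay above $1/2+\Delta/2$ or below $1/2-\Delta/2$, by a Chernoff bound and a union bound over $i$ once $R\gg \log m/\Delta^2$. The gap therefore changes only by a constant factor. Second, the algorithm's success probability on a fixed typical sample set transfers to the average-case acceptance probabilities $s_{\mathcal P}, s_{\mathcal Q}$ required in Lemma~\ref{lem: the average-case quantum adversary lower bound method}. Constant separation of these probabilities is enough, since it makes $\tau(s_{\mathcal P},s_{\mathcal Q})$ bounded away from $1$.

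\textbf{Hellinger distance.} It remains to show $d_H(P,Q)=\Theta(\Delta)$. Using $1-d_H^2=\sqrt{pq}+\sqrt{(1-p)(1-q)}$ with $p=1/2+\Delta$ and $q=1/2-\Delta$, we get $\sqrt{pq}+\sqrt{(1-p)(1-q)} = 2\sqrt{1/4-\Delta^2}=\sqrt{1-4\Delta^2}$. Hence $d_H^2 = 1-\sqrt{1-4\Delta^2}\in[2\Delta^2,4\Delta^2]$ for $\Delta\le 1/2$, and so $d_H=\Theta(\Delta)$. Substituting into Theorem~\ref{thm:composition} gives $\Omega(\sqrt\rho/(\Delta\epsilon))$ queries, since any coherent-oracle algorithm with $T$ queries yields a sample-based algorithm with $O(T)$ queries.
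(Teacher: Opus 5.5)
Your proposal is correct and follows the paper's route. You instantiate $P,Q$ as Bernoulli distributions with biases $\frac12\pm\Delta$, use $d_H(P,Q)=\Theta(\Delta)$, and plug into Theorem~\ref{thm:composition} after the standard estimation-to-decision threshold; your explicit bound $d_H^2=1-\sqrt{1-4\Delta^2}\in[2\Delta^2,4\Delta^2]$ fills in what the paper only asserts. Where you genuinely add something is the oracle transfer: the paper's justification (that a lower bound for the ``weaker'' sample model carries over to the stronger coherent model) runs in the wrong direction as stated, whereas your argument is what actually makes the transfer valid. That argument simulates one coherent query by one $O_B$ query on the empirical biases, takes $R$ large, uses a Chernoff--union bound to keep a gap of $\Delta/2$, and converts worst-case success on typical samples into average-case acceptance probabilities.
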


\section{Conclusion}

We studied quantum approximate counting with Bernoulli oracles and obtained a near-complete
characterization of its query complexity: an upper bound of
$\tilde{O}\!\big(\frac{\sqrt{\rho}}{\Delta\epsilon}+\frac{1}{\Delta\sqrt{\epsilon}}\big)$
and a lower bound of $\Omega(\sqrt{\rho}/(\epsilon\Delta))$. The upper bound combines
QSVT-based coherent error reduction with two-stage adaptive amplitude estimation, while
the lower bound introduces a composition theorem for the quantum adversary method in the
Boolean-over-average-case direction. As a corollary, these results also characterize the
query complexity of quantum counting with bounded-error oracles.

We conclude with two natural open questions. First, our upper bound contains a logarithmic
factor arising from the polynomial approximation in QSVT and the median-of-repetitions
boosting step. It would be interesting to determine whether this logarithmic overhead can
be removed, yielding a tight bound. Second, our model assumes that the algorithm
may adaptively select which distribution to query. An equally natural model, introduced by
Levi, Ron, and Rubinfeld~\cite{levi2013testing}, is the \emph{sampling model}, where at
each step the algorithm receives a pair $(i,j)$ with $i$ drawn from a fixed distribution
over $[m]$ and $j$ sampled from $D_i$. Understanding the quantum query complexity of
approximate counting in the sampling model remains an interesting direction for future work.

\bibliographystyle{alpha}
\bibliography{emc}
\end{document}